\documentclass[11pt]{article}

\usepackage{fullpage}

\usepackage[dvipsnames,svgnames,table]{xcolor}
\usepackage[colorlinks=true,linkcolor=RawSienna,citecolor=ForestGreen]{hyperref}

\usepackage[utf8]{inputenc}

\usepackage{amsthm}
\usepackage{thmtools}
\usepackage[normalem]{ulem}
\usepackage{mathtools,bm}
\usepackage[inline]{enumitem}
\usepackage{amsmath,amsfonts,amssymb,amsthm,hyperref,algorithm,eqparbox,array,ragged2e,stmaryrd,bbm}
\usepackage{xspace}
\usepackage{xcolor}
\usepackage{dsfont}
\usepackage{booktabs}
\usepackage[labelfont=bf]{caption}

\usepackage{tocloft}
\usepackage[nottoc]{tocbibind} 

\usepackage{enumitem}
\usepackage[capitalise,nameinlink,noabbrev]{cleveref}

\usepackage{tikz}

\declaretheorem[name=Theorem]{theorem}
\declaretheorem[sibling=theorem,name=Lemma]{lemma}

\declaretheorem[sibling=theorem,name=Corollary]{corollary}

\declaretheorem[sibling=theorem,name=Fact]{fact}

\declaretheorem[sibling=theorem,name=Claim]{claim}

\theoremstyle{definition}

\declaretheorem[sibling=theorem,name=Open Problem]{problem}

\theoremstyle{remark}

\makeatletter
\def\th@example{%
  \thm@notefont{}
  \normalfont 
}
\def\th@definition{%
  \thm@notefont{}
  \normalfont 
}
\makeatother

\newcommand{\fstr}{f_{\mathrm{str}}}
\newcommand{\fpsd}{f_{\mathrm{psd}}}

\renewcommand{\Pr}[1]{\mathbb{P} \left[ #1 \right]} 

\DeclareMathOperator{\rank}{rank}

\DeclareMathOperator{\disc}{disc}
\DeclareMathOperator{\Disc}{Disc}

\DeclareMathOperator{\Rcc}{R}
\DeclareMathOperator{\Dcc}{D}
\newcommand{\signrank}{\rank_\pm}
\newcommand{\suprank}{\rank_0}

\DeclareSymbolFont{yhlargesymbols}{OMX}{yhex}{m}{n} \DeclareMathAccent{\yhwidehat}{\mathord}{yhlargesymbols}{"62}
\newcommand{\TV}{\mathrm{dist}_{\mathrm{TV}}}

\newcommand{\newclass}[2]{\newcommand{#1}{{\text{\upshape\sffamily #2}}\xspace}}
\newclass{\Pcc}{P}
\newclass{\BPP}{BPP}
\newclass{\UPP}{UPP}
\newclass{\SUPP}{SUPP}

\newcommand{\bR}{\mathbb{R}}

\newcommand{\Oh}{{O}} 

\newcommand{\N}{\mathbb{N}}
\newcommand{\Z}{\mathbb{Z}}
\newcommand{\R}{\mathbb{R}}
\newcommand{\C}{\mathbb{C}}

\newcommand{\T}{\mathbb{T}}
\newcommand{\One}{\mathds{1}}
\newcommand{\cP}{\mathcal{P}}

\newcommand{\cD}{\mathcal{D}}
\newcommand{\cX}{\mathcal{X}}
\newcommand{\cY}{\mathcal{Y}}
\newcommand{\cZ}{\mathcal{Z}}

\newcommand{\define}[1]{{\itshape #1}}

\renewcommand{\epsilon}{\varepsilon}
\renewcommand{\leq}{\leqslant}
\renewcommand{\geq}{\geqslant}
\renewcommand{\setminus}{\backslash}

\renewcommand{\P}{\mathbb{P}}
\renewcommand{\subset}{\subseteq}

\newcommand{\E}{\mathbb{E}}

\usepackage[normalem]{ulem}

\usepackage[makeroom]{cancel}

\newcommand{\ZN}{\Z_N}
\newcommand{\PL}{\textsc{PL}}
\newcommand{\IIP}{\textsc{IIP}}
\renewcommand{\And}{\textsc{And}}

\begin{document}
\title{\mbox{}\\[0em]
\huge No Constant-Cost Protocol for Point--Line Incidence}

\author{
\mbox{}\\[-0.5em]
\setlength{\tabcolsep}{10pt}
\begin{tabular}{cccc}
Mika G\"o\"os&
Nathaniel Harms&
Florian~K.~Richter&
Anastasia Sofronova\\[-1mm]
\small\slshape EPFL &
\small\slshape UBC &
\small\slshape EPFL &
\small\slshape EPFL
\end{tabular} 
}

\date{
\mbox{}\\[0em]
\today}

\maketitle

\begin{abstract}
\noindent
Alice and Bob are given $n$-bit integer pairs $(x,y)$ and $(a,b)$, respectively,
and they must decide if $y=ax+b$. We prove that the randomised communication complexity
of this \emph{Point--Line Incidence} problem is $\Theta(\log n)$.
This confirms a conjecture of Cheung, Hatami, Hosseini, and Shirley (CCC~2023)
that the complexity is super-constant,
and gives the first example of a communication problem with constant
support-rank but super-constant randomised complexity.
\end{abstract}

\vspace{3em}

\setlength{\cftbeforesecskip}{0pt}
\renewcommand{\cftsecpagefont}{\normalfont}
\setcounter{tocdepth}{2}
\tableofcontents

\thispagestyle{empty}
\setcounter{page}{0}
\newpage

\section{Introduction}

Given $n$-bit integers $x,y,a,b$, how hard is it to check whether
\vspace{-0.4em}
\[
    y = ax + b ?
\]
In communication complexity, we suppose Alice has $(x,y)$ and Bob has $(a,b)$.
How many bits of communication are required for them to check this equation?
This is the simplest arithmetic (in)equality whose randomised
communication complexity is not yet known. For comparison, the randomised complexities of $y = ax$, $y \geq b$, and $y \geq ax+b$ are
fully understood; see~\cref{tab}.

The problem of deciding $y = ax+b$ is known as \textsc{Point--Line Incidence}
($\PL$).
To our annoyance, it is \emph{a priori} unclear if this can be solved by
even a \emph{constant-cost} randomised protocol, that is, with cost independent
of $n$. Indeed, there is no known structural criterion that rules out a constant-cost protocol for $\PL$, despite it being such a natural problem (see~\cref{sec:constant-cost} for the ongoing quest to find such structural criteria).
Cheung, Hatami, Hosseini, and Shirley \cite{Cheung23} conjectured that it does
not have a constant-cost protocol. We confirm their conjecture:

\begin{theorem}
\label{thm:cc-main}
The public-coin randomised communication complexity of $\PL$ is $\Omega(\log n)$.
\end{theorem}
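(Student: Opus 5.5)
The plan is to prove \cref{thm:cc-main} by a \emph{reduction} from a problem whose $\Omega(\log m)$ randomised lower bound is already known, namely \textsc{Greater-Than} on $m$-bit strings. A constant-cost protocol for $\PL$ would then yield an $O(c)$-cost protocol for \textsc{Greater-Than} on $m=n^{\Omega(1)}$ bits. Taking $m$ polynomial in $n$ gives the $\Omega(\log n)$ bound.

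A naive rectangle argument does not work, so it must be avoided. The incidence matrix of $\PL$ is $K_{2,2}$-free, since two points determine a line. A corruption argument combining Szemerédi--Trotter with the Kővári--Sós--Turán bound only shows that rectangles with many incidences are corrupted up to a multiplicative constant $K$ plus an additive $K^{-2}$. After error amplification this gives merely a constant lower bound, and the same reasoning applies to \textsc{Equality}, which has constant cost. The hardness must therefore come from the \emph{order/carry} structure of integer arithmetic, not from incidence counting. This is what the comparison with $y\geq ax+b$ in \cref{tab} suggests.

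The key steps, in order, are as follows.
\begin{enumerate}
\item Write $y=ax+b$ as $y-b=ax$. Fix a slope $a$ of a special form, say $a=2^k$ or a small modulus structure, so that Bob's intercept $b$ and Alice's point $(x,y)$ interact through additions with carries.
\item Use public randomness in the reduction: both players shift their inputs by a common random offset, chosen so that whether the equation holds encodes a comparison. Concretely, I would aim for a randomised embedding, computable without communication, of \textsc{Greater-Than}$(u,v)$ into a short $\And$ or $\textsc{Or}$ of $\PL$-instances, the number of instances independent of $n$. In each instance Alice's point and Bob's line are affine images of $u$, respectively $v$, together with shared random bits.
\item Combine these instances with constant overhead. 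An $\And$ or $\textsc{Or}$ of $O(1)$ instances costs only a constant factor.
\item Conclude that $\Rcc(\textsc{Greater-Than}_m)\leq O(\Rcc(\PL_n))$, and invoke the known $\Omega(\log m)$ bound.
\end{enumerate}

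The main obstacle is step 2. Exact equality $y=ax+b$ is ``equality-like'', whereas \textsc{Greater-Than} is not, so a deterministic embedding is impossible: the $\PL$ matrix contains no large half-graph. The reduction must be genuinely randomised and distributional. What I would try first is the following. The shared randomness picks a random scale $2^j$ and a random line through a prescribed lattice structure. The point should lie on the chosen line exactly when the first differing bit of $u,v$ sits at position $j$ and has the right orientation. Then $\textsc{Or}_j$ over the $m$ scales is \emph{not} constant-size, so instead I would use a binary search over $j$ guided by the $\PL$ oracle. That yields only $O(c\log m)$, which is circular.

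If this circularity cannot be broken, the fallback is a direct distributional argument. I would take a hard distribution supported on a grid where incidences occur with constant probability, for example lines of slope in $[K]$ and points in $[K]\times[K^2]$. Then I would show, via a Fourier/equidistribution estimate over $\ZN$, that any cost-$c$ protocol's rectangles cannot separate incidences from near-incidences $y=ax+b\pm1$ unless $2^{c}\gtrsim \log K$. Finally, I would choose $K=2^{n^{\Omega(1)}}$ so that this gives the $\Omega(\log n)$ bound.
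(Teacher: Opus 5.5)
\textbf{There is a genuine gap: the proposal contains no proof.}

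\textbf{Main route.} Step~2 is the whole argument, and it is never constructed. You give no randomised embedding of \textsc{Greater-Than} into $O(1)$ instances of $\PL$. Your one concrete attempt (a random scale $2^j$ followed by binary search) costs $O(c\log m)$, which is circular, as you note yourself. The paper uses no reduction at all; it bounds $\Disc(\PL)$ directly.

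\textbf{Fallback.} Your last paragraph is therefore the right \emph{type} of argument, but it fails at its first concrete choice. Incidences versus near-incidences $y=ax+b\pm1$ are separated perfectly by a 2-bit protocol. Alice sends $(x\bmod 2,\,y\bmod 2)$ and Bob tests $y\equiv ax+b\pmod 2$. This test always holds on incidences and never on near-incidences. So one of the protocol's four rectangles has discrepancy $\Omega(1)$, and no Fourier or equidistribution estimate can give the bound you want. The same trick with the least prime $q\nmid d$ distinguishes any fixed offset $d$ using $O(\log q)$ bits. Hence the offset must be divisible by every integer up to a threshold $Q$ that grows with $n$.

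\textbf{What the paper does instead.} It takes $d=Q!$ with $Q=\Theta(n^{0.3})$ and draws slopes as random primes $p\leq W=10e^{Q}$. The key idea you are missing is a circle-method split of $f=\One_A$ by the vertical frequency $\eta$.
\begin{itemize}
\item Major arcs are those $\eta$ with $\eta/N$ within $1/T$ of some $a/q$ with $q\leq Q$, where $T=10\epsilon^{-1}d$. There $\|\eta d/N\|_{\T}\leq d/T\leq\epsilon$, so this part $\fstr$ is almost invariant under the shift by $d$.
\item On minor arcs, all slopes $p\leq W$ with $\|(\xi+p\eta)/N\|_{\T}\leq 1/(2T)$ lie in one residue class modulo some $q>Q$. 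This holds because two distinct fractions with denominators at most $W$ cannot be within $2/T<1/W^2$ of each other. By Siegel--Walfisz these slopes form an $O(\log\log Q/Q)$ fraction of $\cP_W$.
\item For every other slope, averaging along the line damps the Fourier coefficient to $O(TW/M)$. Hence $\fpsd$ is unbiased on typical lines.
\end{itemize}
Together these give the Line Lemma. The discrepancy bound $O(n^{-0.1})$ then follows from the TV distance between the conditioned line distributions and Cauchy--Schwarz.

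Your heuristic that the hardness must come from order or carry structure also points the wrong way. The matching upper bound is a random-modulus test, and the lower bound works precisely by making every small-modulus test useless.
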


This is tight: there is an $O(\log n)$-bit randomised protocol as observed
by~\cite{CLV19}. The players choose a random prime $p\leq
O(n)$, Alice sends $x$ and $y$ modulo~$p$, and Bob checks whether $y = ax+b
\pmod{p}$. This protocol has one-sided error, because if $y = ax+b$ then this
remains true modulo $p$.

The same idea gives a protocol for the more general
problem of checking $\sum_{i=1}^k x_iy_i = 0$ for $n$-bit integers $x_i, y_i$.
This is \textsc{Integer Inner Product} \cite{CLV19,Cheung23,Cheung25ITCS}.
As a corollary, we get tight bounds on this problem as well, stated in \cref{sec:intro-iip}.
These results witness the first separation between randomised communication and \emph{support-rank},
which we discuss in \cref{sec:intro-rank}.

\paragraph{Techniques.}
Any lower bound argument for $\PL$ needs to rule out using further number-theoretic
tricks to speed up the protocol. The lower bound boils down to the following
(informal) statement. Consider the $N \times N$ grid, $N=2^n$. Let $(\bm x, \bm y) \sim [N]^2$ be a uniformly random
point and $\bm \ell$ a line with random slope~$\bm p$ through~$(\bm x,
\bm y)$ ($\bm p$ will be a random small prime). Let $\bm
\ell^{\downarrow d}$ be the same line but shifted down by a small offset $d$ (which will be the product of all small numbers). We
will show that any sufficiently dense set $A\subset[N]^2$ cannot distinguish~$\bm \ell$ from $\bm \ell^{\downarrow d}$ in the following sense:\\[-1.3em]

\noindent

\begin{center}
\begin{tikzpicture}[scale=0.25]
 
 

\tikzstyle{ihatetikz} = [text width=20em]
 \node[ihatetikz] at (37,10) {\emph{Line Lemma (\cref{lem:line}, informal):}\\ With high probability,
 $|A \cap \bm \ell| \approx |A \cap \bm \ell^{\downarrow d}|$.};
 

\draw[gray!30, line width=0.3pt]
  \foreach \i in {0,1,...,17} {
    (\i+0.5,0) -- (\i+0.5,18)
    (0,\i+0.5) -- (18,\i+0.5)
  };
\draw[black, line width=0.4pt]
    (0.5,0) -- (0.5,18)
    (0,0.5) -- (18,0.5);
 
\draw[black, line width=.5pt]
    (3.0,-0.5) -- (12.5,18.5);
    
\node[black, anchor=south] at (12.5,18.5) {$\ell$};

\draw[black, dashed, line width=.5pt]
    (4.5,-0.5) -- (14.0,18.5);
    
\node[black, anchor=south] at (14.5,18.5) {$\ell^{\downarrow d}$};

 
 
 
\foreach \x/\y in {3/11,3/14,3/15,
        4/5, 4/6, 4/8, 4/10, 4/11, 4/14, 4/15,
        5/4, 5/6, 5/7, 5/10, 5/11, 5/12, 5/14, 5/16,
        6/4, 6/5, 6/6, 6/10, 6/12, 6/13, 6/14, 6/15,
        7/4, 7/6, 7/7, 7/9, 7/11, 7/12, 7/14, 7/16,
        8/8, 8/10, 8/11, 8/14, 8/15,
        9/10, 9/12, 9/13, 9/16,
        10/6, 10/8,
        11/13,
        13/2, 13/3, 13/4,
        14/1, 14/3, 14/5,
        15/1, 15/2, 15/6,
        16/2
        }{
  \filldraw[color=BurntOrange!30!white,fill=BurntOrange!30!white,thick] (\x+0.5, \y+0.5) circle (10pt);
}
\foreach \x/\y in {5/1, 7/5, 9/9, 11/13}{
  \filldraw[color=BurntOrange,fill=BurntOrange!30!white,thick] (\x+0.5, \y+0.5) circle (10pt);
}

\foreach \x/\y in {5/4, 6/6, 8/10, 9/12}{
  \filldraw[color=BurntOrange,fill=BurntOrange!30!white,thick] (\x+0.5, \y+0.5) circle (10pt);
}
 
\fill[fill=BurntOrange] (8.5, 10.5) circle (11pt);
 
\small
 
\node[black] at (8+0.5, -0.7) {$\bm x$};
\node[black] at (17+0.5, -0.7) {$N$};
 
 
\node[black, anchor=east] at (0.1, 10+0.5) {$\bm y$};
\node[black, anchor=east] at (0.1, 17+0.5) {$N$};
 
\end{tikzpicture}
 
\end{center}

Our proof of this lemma relies on a new type of decomposition lemma (\cref{lem:decomposition}) that expresses any bounded function $f\colon\ZN^2\to[0,1]$ as a sum of a structured component (which is locally periodic) and a pseudorandom component (which is unbiased over lines). We discuss our proof techniques in more detail in~\cref{sec:overview}. For now, we spend the rest of this introduction motivating the study of \textsc{Point--Line Incidence} and explaining the implications of \cref{thm:cc-main}.

\begin{table}[t]
\centering
\renewcommand{\arraystretch}{1}
\newcommand{\pad}{\hskip 1.1em}
\setlength\tabcolsep{.2em}
\begin{tabular}{l @{\pad} l @{\pad}l @{\pad}l @{\pad} l @{\pad} r}
\toprule[.5mm]
\bf Problem
& \bf Task
& \bf \boldmath $\Rcc$
& \bf \boldmath $\suprank$
& \bf \boldmath $\signrank$ &
\bf Reference \\
\midrule
\textsc{Equality} & $y=b$ &  $\Theta(1)$ & 2$^\dagger$ & 3 \\
\textsc{Greater-Than} & $y\geq b$ &  $\Theta(\log n) $ & $\exp(\Theta(n))$ & 2
& \cite{Braverman2015,Viola2015,Srinivasan2023}\\
\textsc{Halfplane Membership} & $y\geq ax+b$ &  $\Theta(n) $ & $\exp(\Theta(n))$ & 3
& \cite{ACHS24} \\
\textsc{Point--Line Incidence} & $y = ax+b$ &  $\Theta(\log n) $ & $3^\dagger$ & $\Theta(1)$
& This paper (\hyperref[thm:cc-main]{Thm.~\ref{thm:cc-main}})\\
\bottomrule[.5mm]
\end{tabular}
\caption{Arithmetic problems together with their randomised communication complexity~($\Rcc$), support-rank ($\suprank$), and sign-rank ($\signrank$). Here Alice and Bob get $n$-bit integers $(x,y)$ and~$(a,b)$, respectively. When indicated with $\dagger$, the support-rank is given for the \emph{negated} problem.}
\label{tab}
\end{table}

\subsection{Consequences for Rank Measures}
\label{sec:intro-rank}

Since $y = ax+b$ is a simple polynomial equation, it has low algebraic complexity as formalised by \emph{support-} and \emph{sign-rank}. These  are defined for a two-party function $f\colon\cX\times\cY\to\{0,1\}$ as follows.
\begin{itemize}[leftmargin=1.5em]
\item \emph{Support-rank} $\suprank(f)$ is the smallest dimension~$r\in\N$ in which $f$ can be represented by linear equalities; that is, such that there exist embeddings $\phi \colon \mathcal
X \to \bR^r$ and $\psi \colon \mathcal Y \to \bR^r$ with
\[
    \forall x,y\colon \qquad f(x,y) = 0 \enspace\iff\enspace \langle \phi(x), \psi(y) \rangle = 0.
\]
In other words, the support-rank of $f$, viewed as a $|\cX|$-by-$|\cY|$ boolean matrix, is the least rank of a matrix $M\in\R^{\cX\times\cY}$ that has the same support as $f$. Previously, support-rank has been called \emph{nondeterministic rank} in quantum communication complexity~\cite{dWol03}, \emph{equality rank} in circuit complexity~\cite{HP2010}, and \emph{minimum rank} in graph theory~\cite{FH07}.

\item
\emph{Sign-rank} $\signrank(f)$ is the smallest dimension~$r\in\N$ in which $f$ can be represented by linear inequalities; that is, such that there exist embeddings $\phi \colon \mathcal
X \to \bR^r$ and $\psi \colon \mathcal Y \to \bR^r$ with
\[
    \forall x,y\colon \qquad
    \langle \phi(x), \psi(y) \rangle \cdot (-1)^{f(x,y)} >0.
\]
Sign-rank is known to be equivalent to \emph{unbounded-error randomised
communication}, where the two parties have private randomness and must succeed
with probability $> 1/2$~\cite{PS86}. Sign-rank is much more powerful than support-rank: Every problem of support-rank $r$ has sign-rank~$O(r^2)$ (see, e.g.,~\cite{GHIS25}), while there are problems over $n\coloneqq \log\max(|\cX|,|\cY|)$ input bits with sign-rank $2$ and support-rank $2^n$ (e.g., \textsc{Greater-Than}).
\end{itemize}
For \textsc{Point--Line Incidence}, the embeddings $\phi(x,y) = (x, y, 1)$ and~$\psi(a,b) = (a, -1, b)$ show that the support-rank of the negation $\neg\PL$ is at most~3. Consequently, \cref{thm:cc-main} implies the following separation of support-rank and public-coin randomised communication complexity~$\Rcc(f)$; previously, it was not known whether $\suprank(f) = O(1)$ implies $\Rcc(f) = O(1)$.
\begin{corollary} \label{cor:sep}
    There is an $f$ with $\suprank(f)\leq 3$ and $\Rcc(f)\geq \omega(1)$.
\end{corollary}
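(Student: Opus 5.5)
The plan is to take $f = \neg\PL$, that is, $f((x,y),(a,b)) = 1$ if and only if $y \neq ax+b$, with $\cX = \cY = [N]^2$ and $N = 2^n$, letting $n$ grow. The proof has two parts: an upper bound on the support-rank and a lower bound on the randomised complexity.

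\textbf{Support-rank.} I would use the embeddings given just before the statement, $\phi(x,y) = (x,y,1)$ and $\psi(a,b) = (a,-1,b)$ in $\R^3$. They satisfy
\[
\langle \phi(x,y), \psi(a,b)\rangle = ax - y + b,
\]
which vanishes exactly when $y = ax+b$, i.e.\ exactly when $f = 0$. This gives $\suprank(f) \leq 3$ directly from the definition.

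\textbf{Randomised complexity.} Negating the output of a protocol keeps its cost and its error probability unchanged, so $\Rcc(\neg\PL) = \Rcc(\PL)$. By \cref{thm:cc-main} this quantity is $\Omega(\log n)$, which tends to infinity with $n$. Hence, as a family indexed by $n$, $f$ has $\Rcc(f) \geq \omega(1)$ while its support-rank stays at most $3$ for every $n$.

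The only point that needs care is the convention that $\suprank$ is measured for the negated problem (the $\dagger$ in \cref{tab}). Accordingly, $f$ must be taken to be $\neg\PL$ rather than $\PL$, and one then checks that negation does not change $\Rcc$. All of the real difficulty sits in \cref{thm:cc-main}, which I take as given.
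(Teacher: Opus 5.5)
Your proof is correct and is the paper's own argument: take $f=\neg\PL$, use the embeddings $\phi(x,y)=(x,y,1)$ and $\psi(a,b)=(a,-1,b)$ to get $\suprank(f)\leq 3$, and carry the $\Omega(\log n)$ bound of \cref{thm:cc-main} over to $f$ by negating the protocol's output. One minor point: the hard distribution behind \cref{thm:cc-main} uses lines with negative intercepts, so Bob's domain should be $\{-N^2,\ldots,N^2\}^2$ as in \cref{sec:overview}, not $[N]^2$; the support-rank embedding works verbatim on any integer domain, so nothing else changes.
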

The analogous separation for sign-rank has been long known: \textsc{Greater-Than} has sign-rank~$2$ but randomised complexity~$\Omega(\log n)$~\cite{Braverman2015,Viola2015,Srinivasan2023}. For support-rank, the separation in \cref{cor:sep} is qualitatively the best possible, in the sense that all problems of
support-rank 2 reduce to \textsc{Equality} and therefore have randomised
complexity $O(1)$. On the other hand, proving a more dramatic quantitative separation for an $n$-bit problem remains open:
\begin{problem}
    Is there an $f$ with $\suprank(f)\leq O(1)$ and $\Rcc(f)\geq n^{\Omega(1)}$?
\end{problem}
The analogous separation for sign-rank was recently obtained by~\cite{HHL20,ACHS24}. They exhibit problems with sign-rank 3 that have randomised complexity $\Theta(n)$. In fact, this holds for the \textsc{Halfplane Membership} problem of deciding whether $y\geq ax+b$.

\subsection{Integer Inner Product, and Motivation from Oracle Protocols}
\label{sec:intro-iip}

In the \textsc{Integer Inner Product} ($\IIP^k_n$) problem Alice and Bob receive $n$-bit integers~$x_1,\ldots,x_k$ and $y_1,\ldots,y_k$, respectively, and they want to check whether $\sum_{i=1}^k x_i y_i = 0$. As already mentioned, this problem has randomised complexity $O(k \log
n)$ \cite{CLV19}. We get a matching lower bound:

\begin{corollary}
\label{cor:iip}
The public-coin randomised communication complexity of $\IIP^k_n$ is
$\Omega(k\log n)$ for every~$3\leq k \leq n^\epsilon$ where $\epsilon > 0$ is
some constant.
\end{corollary}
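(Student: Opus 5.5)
We derive \cref{cor:iip} from \cref{thm:cc-main} by two reductions: one showing that $\IIP^3$ already contains $\PL$, and one showing that $\IIP^k$ contains roughly $k/3$ independent copies of it. The $\log n$ factor comes from \cref{thm:cc-main}; the factor $k$ comes from a direct-sum argument.

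\textbf{Step 1: $\PL$ embeds into $\IIP^3$.} Given $\PL$ inputs $(x,y)$ for Alice and $(a,b)$ for Bob, set Alice's vector to $(x,y,1)$ and Bob's vector to $(a,-1,b)$. Then the inner product is $ax - y + b$, which is zero exactly when $y = ax+b$. The entries have $n+O(1)$ bits, and the sign of $-1$ can be absorbed by a standard encoding of signed integers. This gives $\Rcc(\IIP^3_n)\geq \Omega(\log n)$, which settles the case $k=3$.

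\textbf{Step 2: reduce $\IIP^k$ to $m=\lfloor k/3\rfloor$ copies of $\PL$.} Packing disjoint blocks of coordinates into one integer equation only computes the \emph{sum} of the block values, so we must stop cancellation between blocks. We do this by scaling.

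\begin{itemize}
\item Block $j$ computes $v_j = a_jx_j - y_j + b_j$, with $|v_j|\le 2^{2n+2}$.
\item Multiply block $j$ by $M^{j}$ with $M = 2^{2n+4}$. Alice's block becomes $(M^j x_j, M^j y_j, M^j)$ and Bob's stays $(a_j,-1,b_j)$.
\item Then $\sum_j M^j v_j = 0$ iff every $v_j=0$, since the sum is a base-$M$ expansion with small digits.
\end{itemize}

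The resulting entries have $O(nk)$ bits. So $\IIP^{k}_{O(nk)}$ computes the $\And$ of $m$ instances of $\PL_n$. When $k\leq n^{\epsilon}$, the bit length $N' = O(nk)$ is at most $n^{1+\epsilon}$, so $\log N' = \Theta(\log n)$ and nothing is lost in the logarithm. Reparametrising gives the same statement for $\IIP^k_n$.

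\textbf{Step 3: a direct-sum lower bound for the $\And$ of copies.} This is the main obstacle, because \cref{thm:cc-main} is a single-copy bound and randomised direct sums are not automatic. We need $\Rcc(\And_m\circ\PL)\geq\Omega(m\log n)$.

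\begin{itemize}
\item \textbf{First attempt: information complexity.} Extract from the paper's proof a hard distribution $\mu$ for $\PL$ that is mostly supported on $1$-inputs, or non-incidences, in the sense of $\neg\PL$. Then show that the information cost of $\PL$ under $\mu$ is $\Omega(\log n)$. The standard embedding argument, as in the direct sum for disjointness, then gives $\Omega(m\log n)$ for the $\And$.
\item \textbf{What this needs.} The argument requires the lower bound to come from a distributional or discrepancy-type argument that is robust to information cost, rather than only a bound on the number of rectangles. The paper's Line Lemma is itself a distributional indistinguishability statement: under a structured distribution, incident and shifted-line pairs are indistinguishable by dense rectangles. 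So we would try to upgrade it to an information-cost bound, or equivalently to an $\Omega(\log n)$ bound for protocols with small error on a product-like distribution, and then apply the direct-sum theorem.
\item \textbf{Fallback.} If the upgrade is not available, use the weaker route of a direct sum for one-sided-error or distributional complexity with a tailored hard distribution. The hard distribution is chosen so that each coordinate independently is incident with constant probability. A corruption/rectangle bound then multiplies across coordinates.
\end{itemize}

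The upper range $k\le n^\epsilon$ is needed so that the blow-up in Step 2 and any losses in Step 3 keep the per-copy bound at $\Omega(\log n)$.
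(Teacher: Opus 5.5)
The gap is Step~3, which is the real content of the corollary beyond \cref{thm:cc-main}. Steps~1 and~2 are sound. Your base-$M$ packing is a legitimate deterministic replacement for the paper's random-sign reduction $x\mapsto\bm z\odot x$, and its $O(nk)$ bit-length growth is harmless after reparametrising, since $\log(nk)=\Theta(\log n)$ for $k\le n^{\epsilon}$.

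In Step~3 you never establish $\Rcc(\And_m\circ\PL)\ge\Omega(m\log n)$. You only list strategies in conditional form, and neither works as stated.
\begin{itemize}
\item \emph{Information complexity.} A direct sum for $\And$ in the style used for \textsc{Disjointness} needs an $\Omega(\log n)$ information-cost bound against protocols correct on \emph{all} inputs. The bound must hold under a distribution on which each coordinate is pivotal, i.e.\ concentrated on incidences $\PL^{-1}(1)$, not on non-incidences as you write. Small discrepancy, even via the discrepancy-to-information theorem of~\cite{Braverman2015}, only gives information bounds under the balanced distribution $\cD$ itself. So your proposed ``upgrade'' is a new lower bound that you have not proved.
\item \emph{Fallback.} Corruption bounds do not in general multiply across coordinates. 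Known direct-product results for corruption carry extra hypotheses on the hard distribution. Here $\cD$ is highly correlated (a point paired with a line through it or just below it), and you neither specify a distribution nor check any such property.
\end{itemize}

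The missing ingredient is a ready-made $\And$-composition theorem for discrepancy, \cref{lem:and} from~\cite{GJPW2018}:
$\Disc(f)\le O\big(\Rcc(\And_k\circ f)/k+\log\Rcc(\And_k\circ f)\big)$.
It is stated there for the smooth rectangle bound, which is at least $\Disc$. To use it you need the discrepancy bound $\Disc(\PL)=\Omega(\log n)$ of \cref{lem:disc}, not merely the communication bound of \cref{thm:cc-main}. The argument then splits into two cases.
\begin{itemize}
\item If $\Rcc(\And_k\circ\PL)\ge k\log n$, you are done.
\item Otherwise the additive term is at most $\log(k\log n)\le\epsilon\log n+\log\log n$. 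Then $\log n\le O(\Rcc(\And_k\circ\PL)/k)+O(\epsilon\log n+\log\log n)$, which for small $\epsilon$ rearranges to $\Rcc(\And_k\circ\PL)\ge\Omega(k\log n)$.
\end{itemize}
This also pins down the role of the hypothesis $k\le n^{\epsilon}$: it absorbs the additive $\log\Rcc$ loss of the composition lemma. With the paper's random-sign reduction there is no bit-length growth to absorb at all.
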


The condition $k \leq n^\epsilon$ is most likely an artefact of our technique. Nevertheless, it is only a mild restriction since for large $k$ there is also an $\Omega(k)$ lower bound by a reduction from \textsc{Disjointness}.

\textsc{Integer Inner Product} was introduced by~\cite{CLV19} to show that efficient randomised protocols cannot be simulated by the \textsc{Equality} oracle. Specifically, they showed that $\IIP^6_n$ has no efficient protocol if we are only allowed to use randomness to solve instances of \textsc{Equality}. In their words, ``Equality alone does not simulate randomness.'' In fact,
\textsc{Equality} doesn't help in this problem at all: a deterministic protocol
with access to an \textsc{Equality} oracle still requires $\Omega(n)$ bits of
communication. We write this result as $\Dcc^{\textsc{Eq}}(\IIP^6_n)
\geq \Omega(n)$, where $\Dcc^{\textsc{Eq}}(f)$ is the number of
\textsc{Equality} queries required to compute $f$. This was improved to hold for
\textsc{PL} by \cite{Cheung23} and simplified by \cite{Cheung25ITCS}. It remains
open whether it is possible to push this type of separation further: the
papers~\cite{Cheung23,GHR25} ask whether a similar lower bound against
\textsc{Equality}-oracle protocols holds for some problem with constant
randomised complexity:
\begin{problem}
Is there a problem $f$ with $\Rcc(f) \leq O(1)$ but $\Dcc^{\textsc{Eq}}(f) \geq \Omega(n)$?
\end{problem}
This was the initial context for the conjecture $\Rcc(\PL) \geq \omega(1)$ of
\cite{Cheung23}, because if \textsc{Point--Line Incidence} had a constant-cost
protocol, it would have resolved this question. In light of our
\cref{thm:cc-main}, the question remains open. Currently, the best lower bound
against \textsc{Equality}-oracle protocols for a problem with $\Rcc(f)\leq O(1)$
is $\Dcc^{\textsc{Eq}}(f)\geq \Omega(\sqrt n)$~\cite{GHR25}. 

\subsection{Motivation from Constant-Cost Communication}
\label{sec:constant-cost}

Our results fit more broadly within a recent line of work on ``constant-cost'' communication. Traditionally, communication complexity tries to classify $n$-bit communication problems into \emph{easy problems} that can be solved with~$(\log n)^{O(1)}$ bits of communication and \emph{hard problems} that require~$n^{\Omega(1)}$ bits. The papers~\cite{HHH23,HWZ22} proposed to consider the \emph{constant-cost} complexity dichotomy~$O(1)$-vs-$\omega(1)$ instead of the traditional $(\log n)^{O(1)}$-vs-$n^{\Omega(1)}$.
This provides a new ``sandbox'' where we can study old questions from a novel perspective, helping us develop new lower-bound methods, and also to discover new questions interesting in their own right.

\paragraph{Class \boldmath $\BPP_0$.}
A central mystery in this new theory asks to characterise all total problems $f$
that have constant randomised complexity, $\Rcc(f)\leq O(1)$. This class of
problems, denoted $\BPP_0$, captures the most extreme ways in which randomness
can benefit protocols. It has been heavily investigated in recent
work~\cite{HHH22,EHK22,HHPTZ22,HZ24,FHHH24,FGHH25,GHR25,Balla2025}. The class
has many equivalent definitions (constant discrepancy, point--halfspace
arrangements with constant margin, PAC learnable under pure differential
privacy~\cite{FX14}), but all of them so far are ``semantic,'' hiding a promise.
Since constant-cost protocols seem very restrictive, one may hope for a simple
``syntactic'' characterisation. Various natural candidates for a
characterisation have been disproved: the class admits no complete
problem~\cite{FHHH24}; nor certain complete hierarchies~\cite{FGHH25,GHR25}.
\cref{thm:cc-main} poses a challenge for any such characterisation, which
must somehow generalise our highly-tailored lower bound argument (at least qualitatively).

\paragraph{Class \boldmath $\UPP_0$.}
The second-most prominent constant-cost class is $\UPP_0$ that contains all total problems with a constant-cost unbounded-error protocol, that is, problems of constant sign-rank~\cite{PS86}. As discussed above, \textsc{Greater-Than} shows that $\UPP_0\not\subseteq\BPP_0$ and our \cref{cor:sep} upgrades this to a separation $\SUPP_0\not\subseteq\BPP_0$ where $\SUPP_0\subseteq\UPP_0$ is the class of problems with constant support-rank, defined in~\cite{GHIS25}. The converse separation has grown into a nagging problem:
\begin{problem} \label{prob:bpp-upp}
Show that $\BPP_0\not\subseteq\UPP_0$.
\end{problem}
This separation has been shown for \emph{partial} functions by Hatami, Hosseini, and Meng~\cite{HHM23}. Towards a separation for total functions, one might first ask to show the weaker separation $\BPP_0\not\subseteq\SUPP_0$. But this already follows by considering the \textsc{Equality} problem, which has support-rank~$2^n$ on $n$-bit inputs. A better first step is the following, asked in~\cite{GHIS25}.
\begin{problem} \label{prob:bpp-psupp}
Show that $\BPP_0\not\subseteq\Pcc_0^{\SUPP}$.
\end{problem}
Here $\Pcc_0^{\SUPP}$ is the class of problems expressible as boolean combinations of constantly many problems of constant support-rank. This dispenses with \textsc{Equality} as a separating example since its complement has constant support-rank. It is known that $\Pcc_0^{\SUPP}\subseteq\UPP_0$~\cite{GHIS25} so this is truly a necessary first step towards \cref{prob:bpp-upp}. For more discussion on constant-cost communication complexity, we recommend the excellent survey of Hatami and Hatami~\cite{HH24}.

\section{Proof Overview} \label{sec:overview}

We formally view~$\PL$ as a function where Alice is given a point $(x,y)\in
[N]^2$, $N\coloneqq 2^n$, and Bob is given line parameters $(a,b)\in \{-N^2,\ldots,N^2\}^2$ and they need to compute $\PL((x,y),(a,b))=1$ iff~$y=ax+b$. Our lower bound uses the textbook discrepancy method~\cite[\S6]{Rao2020} (and thus it will hold for quantum
protocols as well). We define a pair of input distributions $(\cD_0,\cD_1)$, where~$\cD_i$ is supported over $\PL^{-1}(i)$, such that every rectangle~$A\times B$ (where $A$, $B$ are subsets of Alice's and Bob's inputs, respectively) has small discrepancy,
\begin{equation} \label{eq:disc}
    \big| \cD_0(A\times B) - \cD_1(A\times B) \big| \leq n^{-\Omega(1)}.
\end{equation}
\cref{thm:cc-main} then follows immediately from this discrepancy bound (see~\cref{sec:cc-lb}). The distributions are defined as follows:
\begin{enumerate}[label=$-$,noitemsep]
\item Choose a uniformly random point $(\bm x, \bm y) \in [N]^2$, and a
uniformly random prime $\bm p \leq W$; we explain how to choose $W$ later.
\item Let $\bm \ell_1$ be the line with slope $\bm p$ through $(\bm x, \bm y)$.
\item Let $\bm \ell_0$ be the same as $\bm \ell_1$ but shifted down by an offset parameter $d$, which will
be the product of all ``small'' numbers.
\item Let $\cD_i$ be the distribution of $((\bm x,\bm y),\bm \ell_i) \in\PL^{-1}(i)$.
\end{enumerate}

To see why we choose the offset $d$ as stated, observe
that if there is a small number $q$ not dividing~$d$, Alice can send her coordinates
modulo $q$ and Bob is able to perfectly distinguish $\cD_0$ from $\cD_1$ by
testing his equation for the line modulo $q$.

Showing the discrepancy bound in \cref{eq:disc} boils down to analysing rectangles $A\times B$ where Alice's set $A$ has large marginal probability, that is,
\[
\Pr{(\bm x,\bm y)\in A}= |A|/N^2\geq 1/n^{0.01}.
\]
Conditioned on the event $(\bm x,\bm y)\in A$, Bob's input line in $\cD_i$ becomes
$\bm\ell_i' \coloneqq (\bm \ell_i\mid (\bm x,\bm y)\in A)$.
For every choice of $A$, the maximum discrepancy~\eqref{eq:disc} over all choices of $B$ is then characterised by the \emph{total variation distance}\footnote{For distributions $\mu_0,\mu_1\in[0,1]^\Omega$ over a set $\Omega$, we define $\TV(\mu_0,\mu_1)\coloneqq \max_{B\subset \Omega}|\mu_0(B)-\mu_1(B)|=\frac{1}{2}\|\mu_0-\mu_1\|_1$.}, $\TV$, between $\bm \ell_0'$ and $\bm \ell_1'$. Hence our goal becomes to show
\begin{equation} \label{eq:tv}
\TV(\bm \ell_0', \bm \ell_1') \leq n^{-\Omega(1)}.
\end{equation}
To this end, we note that for all lines $\ell$,
\begin{equation} \label{eq:densities}
\frac{\P[\bm \ell_1'=\ell]}{\P[\bm \ell_0'=\ell]}
= \frac{|A\cap \ell|}{|A^{\uparrow d}\cap \ell|}
= \frac{|A\cap \ell|}{|A\cap \ell^{\downarrow d}|},
\end{equation}
where for a set $A\subseteq\Z^2$ we denote by $A^{\uparrow d}\coloneqq A+(0,d)$ and $A^{\downarrow d}\coloneqq A-(0,d)$ its translations up/down by $d$. To show~\eqref{eq:tv}, we need to prove that the ratio~\eqref{eq:densities} is very close to $1$ for typical lines~$\ell$. To carry out this plan, we proceed to develop tools to understand intersection sizes as in~\eqref{eq:densities}.





\subsection{Input Grid Embedding}

It will be technically convenient for us to work in the abelian group $\Z_N^2=\Z_N\times\Z_N$. Thus, we'll actually think of the input domain of $\PL$ as corresponding to a smaller grid $[M]^2\subseteq\Z^2_N$ where~$M\leq N/2$. This input grid is then naturally embedded in the lower-left corner of $\Z_N^2$ (this is drawn in blue in the illustration below). We define a line with slope $a\in\N$ through the origin that is capped to the square~$(-M,M)^2$ (and then reduced modulo $N$) as
\begin{equation}
\label{eq:def-segment}
\ell_{a,M}\coloneqq\{(x,y)\bmod N \;\colon\; y=ax;\, x,y\in\{-M+1,\ldots,M-1\}\}\subseteq\Z_N^2.
\end{equation}
We consider these lines anchored at an input grid point $(x,y)\in[M]^2$:
\[
    \ell\coloneqq \ell_{a,M}+(x,y).
\]
Lines of this type are \emph{safe}: when restricted to the input grid, $\ell\cap [M]^2$, they precisely correspond to Bob's inputs in $\PL$. Since $M\leq N/2$, these lines do not ``wrap around'' inside $[M]^2$ despite the modulo-$N$ arithmetic. Here is an illustration of a line with slope $a=2$ anchored at $(x,y)$.

\begin{center}
    
\begin{tikzpicture}[scale=0.3]
 
 
 
\fill[ProcessBlue!5] (1.5,1.5) rectangle (9.5,9.5);
 
\draw[gray!30, line width=0.3pt]
  \foreach \i in {0,1,...,17} {
    (\i+0.5,0) -- (\i+0.5,18)
    (0,\i+0.5) -- (18,\i+0.5)
  };
\draw[black, line width=0.4pt]
    (0.5,0) -- (0.5,18)
    (0,0.5) -- (18,0.5);
 
\draw[black, line width=.5pt]
    (4.5,-0.5) -- (11.5,13.5)
    (3.5,15.5) -- (5,18.5);

\draw[Cerulean!50!white, line width=0.3pt]
  \foreach \i in {1,2,...,9} {
    (\i+0.5,1.5) -- (\i+0.5,9.5)
    (1.5,\i+0.5) -- (9.5,\i+0.5)
  };
 
 
\draw[Cerulean,line width=1pt] (1.5,1.5) rectangle (9.5,9.5);
 
\foreach \x/\y in {3/15, 4/17,10/11, 11/13}{
  \filldraw[color=BurntOrange,fill=BurntOrange!20!white,thick] (\x+0.5, \y+0.5) circle (10pt);
}
\foreach \x/\y in {5/1, 6/3, 8/7, 9/9}{
  \filldraw[color=Cerulean,fill=Cerulean!20!white,thick] (\x+0.5, \y+0.5) circle (10pt);
}
 
\fill[fill=Cerulean] (7.5, 5.5) circle (11pt);
 
\scriptsize
 
\foreach \i in {0,1,...,2}{
  \node[black] at (\i+0.5, -0.7) {\i};
}
\node[black] at (7+0.5, -0.7) {$x$};
\node[black] at (17+0.5, -0.7) {$N\!-\!1$};
\node[black] at (9+0.5, -0.7) {$M$};
 
 
\foreach \i in {0,1,...,2}{
  \node[black, anchor=east] at (0.1, \i+0.5) {\i};
}
\node[black, anchor=east] at (0.1, 5+0.5) {$y$};
\node[black, anchor=east] at (0.1, 17+0.5) {$N\!-\!1$};
\node[black, anchor=east] at (0.1, 9+0.5) {$M$};
 
\end{tikzpicture}
 
\end{center}

\subsection{Notation}
We equip functions $f\colon \Z^2_N \to\R$ with the~$L^p$-norm,
\[
\|f\|_p\coloneqq \big(\E_{z\in\Z^2_N}\big[|f(z)|^p\big]\big)^{1/p}.
\]
For $f,g\colon \Z^2_N \to\R$ we define their \emph{cross-correlation} $f\star g\colon \Z^2_N\to\R$ by
\[
(f \star g)(z) \coloneqq \E_{z'\in \Z^2_N}\big[f(z')g(z+z')\big].
\]
This is related to convolution ``$*$'' by $f\star g \coloneqq f'* g$ where $f'(z)\coloneqq f(-z)$. For a set $A\subset \Z^2_N$, we write~$\One_A\colon\Z^2_N\to\{0,1\}$ for its indicator function.
By a slight abuse of notation, we identify $A$ with its \emph{density function} $A \colon \ZN^2 \to [0,N^2]$ defined by
\[
    A(z) \coloneqq \tfrac{N^2}{|A|} \One_A(z).
\]
Under this convention, $A$ has $L^1$-norm
\[
    \|A\|_1 = \tfrac{N^2}{|A|} \E_{z \in \ZN^2}[\One_A(z)] = 1,
\]
and the expression $(A\star f)(z)$ computes the average of $f$ over the translate $A+z$:
\[
(A \star f)(z) = \E_{z'\in A}\big[f(z+z')\big].
\]
\emph{Young's inequality} says that such a smoothing operation can only decrease the $L^2$-norm:
\begin{equation}\label{eq:young}
\|A\star f\|_2 \leq \|A\|_1\cdot\|f\|_2 \leq \|f\|_2. \tag{Young}
\end{equation}

\subsection{Technical Lemmas}

To understand intersection sizes $|A\cap\ell|$ as they appear in~\cref{eq:densities}, we may now reformulate these quantities more abstractly using the notation introduced above. The density of a set $A$ relative to a line $\ell\coloneqq \ell_{a,M}+z$ is expressed as
\[
(\ell_{a,M} \star \One_A)(z)
= \E_{z'\in \ell_{a,M}}[\One_A(z+z')]
= \tfrac{1}{|\ell|}\big|A\cap \big(\ell_{a,M}+z\big)\big|.
\]
The following lemma states that these densities typically change very little when we shift a line down by~$d$. Here we use~$\cP\subseteq \N$ to denote the set of prime numbers and we set~$\cP_W\coloneqq \cP\cap[W]$.
\begin{restatable}[Line Lemma]{lemma}{linelemma}
\label{lem:line}
For every $n\in\N$ there exist $2\leq d,W\leq \exp(n^{0.4})$ such that for every $N,M\geq 2^n$ and function $f\colon \Z_N^2\to [0,1]$, we have
\begin{equation}\label{eq:line}
\E_{p\in\cP_W}\big[\big\| \ell_{p,M} \star f-\ell_{p,M}^{\downarrow d} \star f \big\|_2\big] \leq O(1/n^{0.1}).
\end{equation}
\end{restatable}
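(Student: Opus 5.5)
Write $g_p \coloneqq \ell_{p,M}\star f$. Since $\ell_{p,M}^{\downarrow d}\star f(z)=g_p(z-(0,d))$, the left-hand side of \eqref{eq:line} is $\E_p\|g_p-\tau_d g_p\|_2$, where $\tau_d$ denotes translation by $(0,d)$. The plan is to split $f$ into a structured part and a pseudorandom part, as anticipated by the decomposition lemma (\cref{lem:decomposition}):
\[
f=f_{\mathrm{str}}+f_{\mathrm{psd}},
\]
and to treat the two parts separately.

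\textbf{Structured part.} Here $f_{\mathrm{str}}$ is ``locally periodic'': it is (approximately in $L^2$) invariant under translation by $(0,q)$ for some small modulus $q$. Alternatively, it is an average of translates of a Bohr-set / arithmetic-progression density $P\star f$, where $P$ is a progression with common difference $(0,q)$ and $q\leq \exp(n^{0.3})$.

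Take $d$ to be divisible by every integer up to this bound (e.g.\ $d=\mathrm{lcm}(1,\dots,\exp(n^{0.3}))\le\exp(O(n^{0.3}))\le\exp(n^{0.4})$). Then $d$ is a multiple of $q$. Shifting by $(0,d)$ amounts to shifting by $d/q$ steps of the progression. As long as $d/q$ is tiny compared to the length $L$ of the progression, we get
\[
\|P\star f-\tau_d(P\star f)\|_2\le O(d/(qL)).
\]
Convolving with $\ell_{p,M}$ commutes with $\tau_d$ and does not increase $L^2$-norms by \eqref{eq:young}. Hence this part contributes $O(1/n^{0.1})$ uniformly in $p$. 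The length $L$ can be taken as large as $\exp(n^{0.5})\le M$, since $M\ge 2^n$.

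\textbf{Pseudorandom part.} By the triangle inequality it suffices to show $\E_p\|\ell_{p,M}\star f_{\mathrm{psd}}\|_2\le O(n^{-0.1})$. This is the definition of pseudorandomness I would aim for: $f_{\mathrm{psd}}$ is unbiased over lines of random small prime slope. To bound it, expand
\[
\|\ell_{p,M}\star h\|_2^2=\langle \ell_{p,M}\star\ell_{p,M}\star h,h\rangle
\]
(up to reflection). Then average over $p$ using Cauchy--Schwarz, and compare with a single averaging operator: a ``2D smoothing'' obtained from sums of two segments of distinct prime slopes $p\neq p'$. Such sums $\ell_{p}-\ell_{p'}$ cover a genuinely two-dimensional box-like region nearly uniformly, because distinct primes are coprime.

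So the pseudorandomness condition reduces to $h$ having small average over 2D boxes or generalized progressions, which is exactly what one energy-increment decomposition produces. The diagonal terms $p=p'$ contribute $O(1/|\cP_W|)$, which is negligible for $W$ large.

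\textbf{Building the decomposition.} I would construct it by a density/energy-increment iteration. Start with $f_{\mathrm{str}}=0$. While $\E_p\|\ell_{p,M}\star f_{\mathrm{psd}}\|_2$ is large, find a progression at scale $q$ on which $f$ correlates. Then refine the structured part by averaging $f$ over a finer periodic progression. This increases $\|f_{\mathrm{str}}\|_2^2$ by $n^{-O(1)}$.

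Since $f$ is bounded by $1$, the energy is at most $1$, so the iteration halts after $n^{O(1)}$ steps. At each step the moduli multiply and the scales shrink. This is why $d$ and $W$ end up as large as $\exp(n^{0.4})$ while still leaving scale to spare against $M\ge 2^n$.

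\textbf{Main obstacle.} The main obstacle I expect is the pseudorandom step: showing that large $\E_p\|\ell_{p,M}\star h\|_2$ forces correlation with a locally periodic structure. This requires controlling the distribution of $\ell_p-\ell_{p'}$ over random prime pairs. One must ensure that sums of two prime-slope segments equidistribute on boxes modulo the lcm of small moduli, using that the primes in $\cP_W$ avoid every small modulus in residues. Balancing the parameters (number of iterations, $W$, $d$, progression lengths) so that everything fits under $\exp(n^{0.4})$ with an $n^{-0.1}$ loss is the delicate bookkeeping.
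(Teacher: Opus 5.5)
Your top-level reduction is the paper's proof. The paper applies \cref{lem:decomposition} with $\epsilon=n^{-0.1}$, so that $d,W\le\exp(1/\epsilon^4)=\exp(n^{0.4})$ and $N,M\ge 2^n\ge\exp(1/\epsilon^5)$ for large $n$. It bounds the structured term by Young's inequality after moving the shift onto $\fstr$, and the pseudorandom term by the triangle inequality together with $\|\ell_p^{\downarrow d}\star\fpsd\|_2=\|\ell_p\star\fpsd\|_2$. Your progression estimate $O(d/(qL))$ is a valid analogue of (D1). Had you invoked \cref{lem:decomposition} as a black box, you would be done. The gaps are in building the decomposition yourself.

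First, your parameter claim is false. By the prime number theorem, $\mathrm{lcm}(1,\dots,K)=e^{(1+o(1))K}$. With $K=\exp(n^{0.3})$ your $d$ is therefore doubly exponential, which violates $d\le\exp(n^{0.4})$. The admissible moduli must be polynomial in $n$; the paper takes $Q=10n^{0.3}$ and $d=Q!$. You then have to show that periodic structure at every modulus $q>Q$ is already negligible for $\E_p\|\ell_{p,M}\star h\|_2^2$, with a loss that decays in $Q$. That quantitative statement is the real content of the lemma, and the proposal does not supply it.

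Second, the mechanism you sketch for the inverse step fails. The sums $s(1,p)+t(1,p')$ fill only the sublattice $\{(x,y)\colon y\equiv px \pmod{p'-p}\}$, which has index $|p'-p|$. Coprimality of $p$ and $p'$ plays no role, and for odd primes every such lattice lies in $\{y\equiv x\pmod 2\}$. So the averaged kernel is not nearly uniform on boxes; it is biased modulo every small modulus. For example, $h=(-1)^{x+y}$ satisfies $\ell_{p,M}\star h=h$ for all odd $p$. Hence ``small averages over boxes'' is not the right pseudorandomness notion.

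What is actually needed is a pointwise Fourier statement. One has $|\hat\ell_{p,M}(\xi,\eta)|^2\lesssim (W/M)^2\|(\xi+p\eta)/N\|_{\T}^{-2}$, so only primes with $\|(\xi+p\eta)/N\|_\T\le 1/(2T)$ matter. One must then show the proportion of such primes $p\le W$ is $O(\log\log Q/Q)$ unless $\eta/N$ lies within $1/T$ of a fraction with denominator at most $Q$. The paper proves this with a Dirichlet-type argument: since $T>2W^2$, any two such primes differ by a multiple of one fixed $q$ with $|\eta/N-a/q|\le 1/T$, so they lie in a single residue class mod $q$. It then applies Siegel--Walfisz and $\phi(q)\gg q/\log\log q$.

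Nothing in your proposal replaces this prime-counting input. Knowing that primes avoid the zero residue class says nothing about how concentrated they can be in one unit class. With this input in hand, no energy increment is needed. Put the frequencies with $\eta$ on the major arcs into $\fstr$ and the rest into $\fpsd$; this gives the decomposition in one shot, with no condition at all on $\xi$.
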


Our claimed discrepancy bound~\eqref{eq:disc} follows from this lemma; see~\cref{sec:cc-lb} for the proof.

Our proof of \nameref{lem:line} relies on a new decomposition lemma, which is our main technical contribution. It asserts that any function~$f\colon \ZN^2\to [0,1]$ can be split into a \emph{structured} component that has a local period and a \emph{pseudorandom} component that is unbiased in typical lines. Decompositions of a similar nature arise frequently in number theory, Fourier analysis, and ergodic theory; we refer the reader to \cite{Tao2007} for a survey of this perspective.

\begin{lemma}[Decomposition Lemma]
\label{lem:decomposition}
For every $\epsilon > 0$ there exist $2\leq d,W\leq \exp(1/\epsilon^4)$ such that for every $N,M\geq \exp(1/\epsilon^5)$, every function $f\colon \ZN^2\to [0,1]$ can be written as~$f=\fstr+\fpsd$ where
\begin{enumerate}[leftmargin=3.5em,label={\upshape (D\arabic*)},noitemsep]
\item \label{it:str}
$\fstr\colon \ZN^2\to \R$ satisfies
$\|\fstr-\fstr^{\uparrow d} \|_2 \leq O(\epsilon)$. 
\item \label{it:psd}
$\fpsd\colon \ZN^2\to \R$ satisfies $\E_{p\in\cP_W}\big[\|\ell_{p, M} \star f_\mathrm{psd}\|_2\big] \leq O(\epsilon)$.
\end{enumerate}

\end{lemma}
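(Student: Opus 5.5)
The plan is to argue entirely on the Fourier side of $\ZN^2$ and define both components as Fourier projections. For $\xi=(\xi_1,\xi_2)\in\ZN^2$ and $e(t)\coloneqq e^{2\pi i t}$, write $\hat g(\xi)=\E_z[g(z)e(-\langle\xi,z\rangle/N)]$. Cross-correlation is multiplicative up to conjugation, $\widehat{\ell_{p,M}\star g}(\xi)=\overline{\hat\ell_{p,M}(\xi)}\,\hat g(\xi)$, where $\ell_{p,M}$ is read as a density function. The segment $\ell_{p,M}$ consists of the points $(x,px)$ with $|x|<L_p$, where $L_p\approx M/p$. Hence
\[
\hat\ell_{p,M}(\xi)=\frac{1}{2L_p-1}\sum_{|x|<L_p}e\big(-(\xi_1+p\xi_2)x/N\big)
\]
is a Dirichlet-type kernel. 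It obeys the standard bound $|\hat\ell_{p,M}(\xi)|\leq O\big(1/(L_p\|(\xi_1+p\xi_2)/N\|_{\R/\Z})\big)$.

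\emph{Definition of the decomposition.} Let $\mu(\xi)\coloneqq\E_{p\in\cP_W}|\hat\ell_{p,M}(\xi)|^2$ and $S\coloneqq\{\xi:\mu(\xi)\geq\epsilon^2\}$. Set $\fstr\coloneqq\sum_{\xi\in S}\hat f(\xi)e(\langle\xi,\cdot\rangle/N)$ and $\fpsd\coloneqq f-\fstr$. Since $\mu(-\xi)=\mu(\xi)$, the set $S$ is symmetric, so both components are real-valued. Property \ref{it:psd} then follows from Cauchy--Schwarz and Parseval:
\[
\Big(\E_p\|\ell_{p,M}\star\fpsd\|_2\Big)^2\leq\E_p\|\ell_{p,M}\star\fpsd\|_2^2=\sum_{\xi\notin S}|\hat f(\xi)|^2\mu(\xi)\leq\epsilon^2\|f\|_2^2\leq\epsilon^2 .
\]

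\emph{Property \ref{it:str}.} By Parseval, $\|\fstr-\fstr^{\uparrow d}\|_2^2=\sum_{\xi\in S}|\hat f(\xi)|^2|1-e(d\xi_2/N)|^2$. Since $\sum_\xi|\hat f(\xi)|^2\leq 1$, it suffices to show $\|d\xi_2/N\|_{\R/\Z}\leq\epsilon$ for every $\xi\in S$. This is the step that carries the number theory.
\begin{enumerate}
\item Fix $\xi\in S$. Because $|\hat\ell_{p,M}|\leq 1$ and $\mu(\xi)\geq\epsilon^2$, at least an $\epsilon^2/2$ fraction of primes $p\in\cP_W$ satisfy $|\hat\ell_{p,M}(\xi)|\geq\epsilon/\sqrt2$. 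Choose $W\approx\epsilon^{-3}$, so that $|\cP_W|\gg\epsilon^{-2}$; then this fraction contains two distinct primes $p\neq q$.
\item The kernel bound gives $\|(\xi_1+p\xi_2)/N\|\leq O(W/(\epsilon M))$, and likewise for $q$. Subtracting the two yields $\|(p-q)\xi_2/N\|\leq O(W/(\epsilon M))$.
\item Take $d\coloneqq\mathrm{lcm}(1,\dots,W)\leq e^{O(W)}\leq\exp(1/\epsilon^4)$, so that $(p-q)\mid d$. Using $\|k\theta\|\leq k\|\theta\|$ gives
\[
\|d\xi_2/N\|\leq d\cdot O(W/(\epsilon M))\leq\epsilon
\]
whenever $M\geq\exp(1/\epsilon^5)$.
\end{enumerate}
Consequently $|1-e(d\xi_2/N)|\leq2\pi\epsilon$ on $S$, which gives \ref{it:str} with constant $2\pi$.

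\emph{Expected obstacle.} The main care needed is in the kernel estimate for the capped segment. One must check that the constraint $x,y\in(-M,M)$ with $y=px$ really gives a symmetric interval of length about $2M/p$ with no wrap-around, using $M\leq N/2$. One must also check that the bound $\|(\xi_1+p\xi_2)/N\|\lesssim p/(\epsilon M)$ holds uniformly over $p\leq W$. The rest is bookkeeping of the parameters $W$, $d$ and $M$ against $\epsilon$.
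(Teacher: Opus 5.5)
Your proof is correct, and it takes a genuinely different and more elementary route than the paper.

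\textbf{The paper's route.} It fixes $\fstr$ in advance as the projection onto the major arcs $\mathfrak{M}_T$ in $\eta$ alone, i.e.\ frequencies within $1/T$ of some $a/q$ with $q\le Q$. The parameters are $Q=10\epsilon^{-3}$, $d=Q!$, $W=10e^{Q}$ and $T=10\epsilon^{-1}d$. With this choice (D1) is immediate and all the work is in (D2). The paper shows that the primes $p\le W$ with $\|(\xi+p\eta)/N\|_{\T}\le 1/(2T)$ lie in one residue class modulo some $q\ge Q$. It then uses Siegel--Walfisz together with $\phi(q)\gg q/\log\log q$ to bound their share of $\cP_W$ by $O(\log\log Q/Q)$.

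\textbf{Your route.} You define $\fstr$ via the large spectrum of the averaged kernel $\mu$. This makes (D2) a one-line Parseval estimate and moves the work into (D1), which you settle by pigeonholing two good slopes and taking $d=\mathrm{lcm}(1,\dots,W)$. This option is open to you because you only need $|\cP_W|\gg\epsilon^{-2}$. So $W$ can be polynomial in $1/\epsilon$, and $e^{O(W)}$ still fits under $\exp(1/\epsilon^4)$. The paper's $W\approx e^{Q}$ is exponentially larger than the denominators that $d=Q!$ clears. The differences $p_2-p_1$ that do not divide $d$ are exactly what forces an equidistribution-of-primes input there.

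\textbf{What each route buys.} Yours needs no analytic number theory. Primality of the slopes is never used; any $\gg\epsilon^{-2}$ distinct slopes in $[W]$ would do. Your parameters are also marginally better: $d=e^{O(\epsilon^{-3})}$ versus $Q!=e^{\Theta(\epsilon^{-3}\log(1/\epsilon))}$. The paper's route gives an explicit structured part that depends only on $\eta$, not on $M$, $\xi_1$ or the slope distribution, in the spirit of the circle method. Your argument delivers this too when read contrapositively. If $\|d\xi_2/N\|_{\T}>\epsilon$, then at most one $p\in\cP_W$ has $|\hat\ell_{p,M}(\xi)|\ge\epsilon/\sqrt2$. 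Hence $\mu(\xi)<1/|\cP_W|+\epsilon^2/2\le\epsilon^2$, and you could equally take $\fstr$ to be the projection onto $\{\xi:\|d\xi_2/N\|_{\T}\le\epsilon\}$.

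\textbf{Minor points.} You should say explicitly that $\epsilon$ is below an absolute constant; otherwise $\fstr=f$ works trivially. Both your pigeonhole step and the bound $\mathrm{lcm}(1,\dots,W)\le\exp(1/\epsilon^4)$ need this. Your wrap-around worry is moot, because points of $\ell_{p,M}$ with distinct $x \bmod N$ are distinct. So its density is always uniform either on an interval of $x$-values or on a full subgroup, and in the latter case $\hat\ell_{p,M}(\xi)\in\{0,1\}$.
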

This lemma readily implies \nameref{lem:line}; see \cref{sec:line} below for the short proof. The proof of \nameref{lem:decomposition}, on the other hand, uses Fourier analysis and some number theory; see~\cref{sec:decomposition} for an overview and the proof.

\subsection{Proof of \nameref{lem:line}} \label{sec:line}

Apply \nameref{lem:decomposition} with $\epsilon\coloneqq 1/n^{0.1}$ to obtain the decomposition~$f=\fstr+\fpsd$ with associated parameters~$d,W\leq \exp(n^{0.4})$. Write $\ell_p\coloneqq\ell_{p,M}$, $\ell^{\downarrow}\coloneqq \ell^{\downarrow d}$ for short. We need to verify \cref{eq:line}. We expand it via the triangle inequality:
\begin{equation}
\text{LHS}\eqref{eq:line}
\leq \E_p\big[\| (\ell_p -\ell_p^\downarrow) \star \fstr \|_2\big]
+ \E_p\big[\| (\ell_p -\ell_p^\downarrow) \star \fpsd \|_2\big].
\end{equation}
We'll show that each of these two terms is $O(\epsilon)$, which will complete the proof. To bound the first term, we use Young's inequality, $\|\ell_p\|_1=1$, and the property \ref{it:str}:
\begin{align*}
\| (\ell_p -\ell_p^\downarrow) \star \fstr \|_2
= \| \ell_p \star (\fstr - \fstr^\uparrow)\|_2
\leq \|\ell_p\|_1\cdot \|\fstr - \smash{\fstr^\uparrow}\|_2
\leq O(\epsilon).
\end{align*}
To bound the second term, consider it for a fixed $p\in\cP_W$:
\[
\| (\ell_p -\ell_p^\downarrow) \star \fpsd \|_2
\leq \| \ell_p \star \fpsd\|_2 + \|\ell_p^\downarrow\star \fpsd\|_2
= 2 \| \ell_p \star \fpsd\|_2,
\]
where we used that $\| \ell_p \star \fpsd\|_2 = \|\smash{\ell_p^\downarrow}\star \fpsd\|_2$. 
Taking expectation over $p\in\cP_W$ and applying~\ref{it:psd} shows that the second term is $O(\epsilon)$, as desired.

\section{Proof of \nameref{lem:decomposition}}
\label{sec:decomposition}

\subsection{Fourier-Analytic Preliminaries}

Let $e(t)$ be shorthand for $e^{2\pi i t}$ where $t\in\R$.
The \define{Fourier transform} of a function $f\colon\ZN^2\to\C$ is the function $\smash{\hat{f}}\colon\ZN^2\to\C$ given by
\[
\hat{f}(\xi,\eta)\coloneqq
\E_{( x,  y)\in \ZN^2}[f( x, y) \overline{e_{\xi,\eta}(x,y)}]
\qquad\text{where}\qquad
e_{\xi,\eta}(x,y)\coloneqq e\big(\tfrac{\xi  x+\eta  y}{N}\big).
\]
The \define{Fourier inversion formula} tells us that any $f$ can be recovered from its Fourier coefficients:
\begin{equation}
\label{eqn_inversion_formula}
f~=\sum_{(\xi,\eta) \in \ZN^2} \hat{f}(\xi,\eta) e_{\xi,\eta}.
\end{equation}
We also have the following \emph{Parseval's identity}:
\begin{equation}
\|f\|_2^2 ~= \sum_{(\xi,\eta) \in \ZN^2} |\hat{f}(\xi,\eta)|^2.
\label{eq:parseval}
\end{equation}
For real-valued $f$, this gives (using $\widehat{f \star g}(\xi, \eta) = \hat{f}(-\xi, -\eta) \hat{g}(\xi, \eta)=\overline{\hat{f}(\xi, \eta)} \hat{g}(\xi, \eta)$),
\begin{equation}
\|f \star g\|_2^2 ~= \sum_{(\xi,\eta)\in\Z^2_N} |\hat{f}(\xi, \eta)|^2\cdot |\hat{g}(\xi, \eta)|^2.
\label{eqn_cross_correlation}
\end{equation}

\subsection{Proof Outline}

\paragraph{Choosing the decomposition.}
Our decomposition $f = \fstr + \fpsd$ will be obtained by partitioning the Fourier coefficients of~$f$ into two parts. The partitioning strategy is inspired by the classic \emph{circle method} in number theory (due to Hardy and Littlewood) and proceeds as follows in our setting. When should a Fourier term~$\smash{\hat{f}}(\xi, \eta) e_{\xi,\eta}$ be included in the structured part? It is when~$e_{\xi,\eta}$ is nearly unchanged under a shift by~$d$, that is, when the following difference is small:
\begin{equation}\label{eq:diff}
\Big|e_{\xi,\eta}(x,y-d)-e_{\xi,\eta}(x,y)\Big|=
\left|\frac{e_{\xi,\eta}(x,y-d)}{e_{\xi,\eta}(x,y)} -1\right|=
\left|e\bigg(\frac{\eta d}{N}\bigg)-1\right|.
\end{equation}
Note that when $\eta d/N$ is an integer, this difference is $0$. The idea is to include $e_{\xi,\eta}$ in the structured part when $\eta d/N$ is \emph{nearly} an integer so that~\eqref{eq:diff} is \emph{nearly} 0---here we are using the estimate
\begin{equation} \label{eq:torus}
\forall t\in\mathbb{R}\colon\quad
|e(t)-1| = \Theta(\|t\|_{\mathbb{T}})
\qquad\text{where}\qquad
\|t\|_{\mathbb{T}} \coloneqq \min_{n \in \mathbb{Z}} |t - n|.
\end{equation}
We will set $d$ to be the product of all ``small'' numbers, $d\coloneqq Q!$, where $Q$ is a parameter. Then~$\eta d/N$ is nearly an integer when the frequency $\eta/N\in[0,1)$ is near a rational number $a/q$ with a small denominator $q\leq Q$.
Formally, we classify $\eta\in\Z_N$ as a \emph{major arc} ($\mathfrak{M}_{T}$) if the frequency $\eta/N$ is near (distance $\leq 1/T$) a small-denominator rational; otherwise it is a \emph{minor arc} ($\mathfrak{m}_{T}$):
\begin{align*}
\mathfrak{M}_T\coloneqq\bigcup_{\substack{1\leq q\leq Q\\ \gcd(a,q)=1}} \Bigg\{\eta\in\ZN: \bigg|\frac{\eta}{N}-\frac{a}{q}\bigg|\leq \frac{1}{T} \Bigg\}
\qquad\text{and}\qquad
\mathfrak{m}_T\coloneqq \ZN\setminus\mathfrak{M}_T.
\end{align*}
Our candidate decomposition is then $\fstr\coloneqq f_{\mathfrak{M}_T}$ and $\fpsd\coloneqq f_{\mathfrak{m}_T}$, where
\begin{equation} \label{eq:decomp}
f_{\mathfrak{M}_{T}} \coloneqq\!\!
\sum_{\xi \in \ZN,\eta \in \mathfrak{M}_{T}}\!\!
\hat{f}(\xi,\eta) e_{\xi,\eta}
\qquad\text{and}\qquad
f_{\mathfrak{m}_{T}} \coloneqq\!\! 
\sum_{\xi \in \ZN, \eta \in \mathfrak{m}_{T}}\!\!
\hat{f}(\xi,\eta) e_{\xi,\eta}.
\end{equation}
\paragraph{Verifying \ref{it:str} and \ref{it:psd}.}
We now verify that \eqref{eq:decomp} satisfies \nameref{lem:decomposition} when the parameters are set as
\[
    Q \coloneqq 10\epsilon^{-3} , \qquad
    d \coloneqq Q! , \qquad
    W \coloneqq 10e^Q , \qquad
    T \coloneqq 10\epsilon^{-1} d.
\]
Consider a major arc $\eta \in \mathfrak{M}_{T}$. By definition of $\mathfrak{M}_{T}$, we have that $a/q - 1/T\leq  \eta/ N \leq a/q + 1/T$ for some $q \leq Q$. Multiplying this by $d = Q!$ shows that
\begin{equation} \label{eq:major-torus}
\left\|\frac{\eta d}{N}\right\|_{\mathbb{T}} \leq \frac{d}{T} \leq \epsilon.
\end{equation}
Write $\delta_{(x,y)}\coloneqq N^2\One_{(x,y)}$ for the point mass density at $(x,y)\in\Z^2_N$. Note that $\hat{\delta}_{(x,y)}(\xi,\eta)=\overline{e_{\xi,\eta}(x,y)}$. We can now verify \ref{it:str}:
\begin{align*}\textstyle
\|\fstr-\fstr^{\uparrow d} \|_2^2
&\textstyle = \|(\delta_{(0,0)}-\delta_{(0,d)})\star \fstr\|_2^2 \\
&\textstyle = \sum_{\xi\in \ZN,\eta\in\mathfrak{M}_T} \left|1-e\left(\tfrac{\eta d}{N}\right)\right|^2\cdot |\hat{f}(\xi, \eta)|^2
\tag{Using~\eqref{eqn_cross_correlation}}
\\
&\textstyle \leq \sum_{\xi\in \ZN,\eta\in\mathfrak{M}_T} O\Big(\left\|\tfrac{\eta d}{N}\right\|_{\mathbb{T}}\Big)^2\cdot |\hat{f}(\xi, \eta)|^2
\tag{Using~\eqref{eq:torus}}
\\
&\leq O(\epsilon^2).
\tag{Using~\eqref{eq:major-torus}, \eqref{eq:parseval}, $\|f\|_2\leq 1$}
\end{align*}
To verify \ref{it:psd}, we apply the next lemma, where plugging in our parameter
values yields the desired $O(\epsilon)$ bound. The remainder of this
section is dedicated to its proof.%
\begin{restatable}[Minor Arc Bound]{lemma}{minorarclines}
\label{lem:minor}
For $Q, W, T$ such that $e^Q<W<\sqrt{T}/2$:
\[
\E_{ p \in \cP_W}\big[ \|\ell_{ p, M} \star f_{\mathfrak{m}_{T}}\|_2^2\big]
\leq O\left(\frac{\log{\log{Q}}}{Q} + \frac{T^2W^2}{{M}^2}\right).
\]
\end{restatable}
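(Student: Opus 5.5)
The plan is to bound, for each fixed frequency on the minor arcs, the average over primes of the squared Fourier coefficient of the line, and then sum against Parseval. By \eqref{eqn_cross_correlation},
\[
\E_{p\in\cP_W}\big[\|\ell_{p,M}\star f_{\mathfrak m_T}\|_2^2\big]=\sum_{\xi\in\ZN,\,\eta\in\mathfrak m_T}|\hat f(\xi,\eta)|^2\,\E_{p}\big[|\hat\ell_{p,M}(\xi,\eta)|^2\big].
\]
Since $\|f\|_2\le 1$, Parseval \eqref{eq:parseval} reduces the lemma to showing $\E_p|\hat\ell_{p,M}(\xi,\eta)|^2\le O(\log\log Q/Q+T^2W^2/M^2)$ for every $\xi$ and every $\eta\in\mathfrak m_T$.

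First I compute the coefficient explicitly. With $L_p\approx 2M/p$ the number of integers $x$ with $|px|<M$,
\[
\hat\ell_{p,M}(\xi,\eta)=\frac1{L_p}\sum_{|x|<M/p}e\Big(-\frac{(\xi+p\eta)x}{N}\Big).
\]
The usual geometric-series bound then gives
\[
|\hat\ell_{p,M}(\xi,\eta)|\lesssim \min\big(1,\,(L_p\,\|(\xi+p\eta)/N\|_{\T})^{-1}\big).
\]
Fix $\delta\coloneqq 1/(2T)$. Call $p$ \emph{good} if $\|(\xi+p\eta)/N\|_\T>\delta$, and \emph{bad} otherwise. Each good prime contributes at most $O(p^2/(M^2\delta^2))\le O(T^2W^2/M^2)$. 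Each bad prime contributes at most $1$. It therefore remains to show that the fraction of bad primes is $O(\log\log Q/Q)$.

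The structural step is as follows. Fix one bad prime $p_0$. For any other bad prime $p$, set $k=p-p_0$, so $0<|k|<W$ and $\|k\eta/N\|_\T\le 2\delta=1/T$. Then $\eta/N$ lies within $1/(T|k|)\le 1/T$ of some rational $b/r$ in lowest terms with $r\mid k$, hence $r\le W$. Because $\eta$ is a minor arc, we must have $r>Q$. For two such differences $k,k'$, the corresponding approximants $b/r$ and $b'/r'$ both lie within $1/T$ of $\eta/N$. Distinct rationals with denominators $\le W$ differ by at least $1/W^2>4/T$, using the hypothesis $W<\sqrt T/2$. So all the approximants coincide with a single $a/q$, where $Q<q\le W$. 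Consequently $q$ divides every difference of bad primes, and all bad primes lie in one residue class modulo $q$.

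The counting step is the main obstacle, and I would handle it by cases on $q$; this is where the $\log\log Q/Q$ term arises.
\begin{itemize}
\item If $q\le\sqrt W$, Brun--Titchmarsh bounds the number of primes $\le W$ in a fixed residue class modulo $q$ by $O\big(W/(\varphi(q)\log(W/q))\big)=O\big(W/(\varphi(q)\log W)\big)$. Dividing by $|\cP_W|\asymp W/\log W$ gives a fraction $O(1/\varphi(q))$. Using $\varphi(q)\gtrsim q/\log\log q$ and $q>Q$, this is $O(\log\log Q/Q)$.
\item If $q>\sqrt W$, the residue class contains at most $W/q+1$ integers in $[W]$. The fraction is then $O(\log W/\sqrt W)$, which is negligible since $W>e^Q$.
\end{itemize}
If there are no bad primes, or only one, the bound is trivial, again because $W>e^Q$. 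Adding the good-prime and bad-prime contributions yields the claimed estimate.
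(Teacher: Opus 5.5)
Your proof is correct and follows the paper's argument essentially step for step. It uses the same Parseval reduction to a single minor-arc frequency, the same geometric-series bound, and the same split of primes at $\|(\xi+p\eta)/N\|_{\T}\le 1/(2T)$. It also uses the same structural claim: all bad primes lie in one residue class modulo some $q>Q$, which follows because distinct rationals with denominator at most $W$ are separated by $1/W^2>4/T$. The only real difference is the counting step. You apply Brun--Titchmarsh for $q\le\sqrt W$, whereas the paper applies Siegel--Walfisz for $Q\le q\le(\log W)^2$; both use the trivial bound above their threshold. Your substitute is valid and arguably more natural, since only an upper bound is needed and Brun--Titchmarsh gives one uniformly and effectively in the modulus.
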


\subsection{Proof of \nameref{lem:minor}}

Writing $\ell_p\coloneqq\ell_{p,M}$ for short, we have
\begin{align*}
\E_{ p \in \cP_W}\big[ \|\ell_p \star f_{\mathfrak{m}_T}\|_2^2\big]~
&= \sum_{\xi\in\ZN,\eta\in\mathfrak{m}_T}
\E_{ p \in \cP_W}\big[|\hat{\ell}_p(\xi,\eta)|^2\big]
\cdot |\hat{f}(\xi, \eta)|^2
\tag{Using~\eqref{eqn_cross_correlation}}\\
&\leq \max_{\xi\in\Z_N,\eta\in\mathfrak{m}_T}
\E_{ p \in \cP_W}\big[|\hat{\ell}_p(\xi,\eta)|^2\big].
\tag{Using~\eqref{eq:parseval}, $\|f\|_2\leq 1$}
\end{align*}
We'll bound this for each fixed $\xi\in\Z_N$ and $\eta\in\mathfrak{m}_T$. Let us start by computing $|\hat{\ell}_p(\xi,\eta)|^2$ for a fixed choice of slope $p\in\cP_W$. Define $X_p \coloneqq \{x \in (-M, M) : px \in (-M, M)\}$ as the set of all~$x$-coordinates of points in the line $\ell_p$ and note that $|X_p|=|\ell_p|\geq \Omega(M/W)$. Then
\begin{align}
|\hat{\ell}_p(\xi,\eta)|^2
&= \bigg|\frac{1}{|X_p|}\sum_{x \in X_p}e\big(\tfrac{\xi x + \eta p x}{N}\big)\bigg|^2
\notag \\[-0.5em]
&=\bigg|\frac{1}{|X_p|} e\big(\tfrac{\xi t +\eta pt}{N}\big)\sum_{x = 0}^{|X_p| - 1}e\big(\tfrac{\xi x + \eta px}{N}\big)\bigg|^2
\tag{where $t \coloneqq \min X_p$}
\\
&= \frac{1}{{|X_p|}^2}\left| \frac{1-e(\alpha_p|X_p|)}{1-e(\alpha_p)}\right|^2
\tag{where $\alpha_p \coloneqq \tfrac{\xi + \eta p}{N}$}
\\
&\leq O\big(\tfrac{W^2}{M^2}\big) \frac{1}{|1-e(\alpha_p)|^2} .
\label{eq:coeff-ub}
\end{align}
Here we have two cases depending on the magnitude $|1-e(\alpha_p)|=\Theta(\|\alpha_p\|_{\mathbb{T}})$. We define the set of primes for which this magnitude is small ($\ll 1/T$) and large ($\gg 1/T$) as
\[
    \mathcal{P}_{\xi,\eta,T} \coloneqq \bigg\{p\in \cP: \bigg\|\frac{\xi+ p\eta}{N}\bigg\|_{\T}\leq \frac{1}{2T}\bigg\}
    \qquad\text{and}\qquad
\overline{\mathcal{P}}_{\xi,\eta,T}=\cP\smallsetminus\cP_{\xi,\eta,T}.
\]
We now justify the following calculation that will finish the proof:
\begin{align*}
\E_{p \in \cP_W}\big[|\hat{\ell}_p(\xi,\eta)|^2\big]
&=
\E_{ p \in \cP_W}\big[
\One_{\overline{\mathcal{P}}_{\xi,\eta,T}}(p) \cdot
|\hat{\ell}_p(\xi,\eta)|^2\big]
+
\E_{ p \in \cP_W}\big[
\One_{\mathcal{P}_{\xi,\eta,T}}(p) \cdot
|\hat{\ell}_p(\xi,\eta)|^2\big]
\\
&\leq
\E_{ p \in \overline{\mathcal{P}}_{\xi,\eta,T}}\big[
|\hat{\ell}_p(\xi,\eta)|^2\big]
+
|\mathcal{P}_{\xi,\eta,T}\cap [W]|/|\cP_W|
\tag{$|\hat{\ell}_p(\xi,\eta)|\leq 1$}
\\
&\leq
O\left(\frac{T^2W^2}{{M}^2}\right)
+ O\left(\frac{\log{\log{Q}}}{Q}\right).
\end{align*}
In the last inequality, every $p\in \overline{\cP}_{\xi,\eta,T}$ satisfies $|1-e(\alpha_p)|\geq \Omega(\|\alpha_p\|_{\mathbb{T}})\geq \Omega(1/T)$ and plugging this in~\eqref{eq:coeff-ub} gives the bound on the first term. On the other hand, the following lemma (proved in the remainder of this section) gives the bound on the second term.
\begin{lemma}[Prime Bound]
\label{lem:prime-dense}
Let $\xi\in\ZN$, $\eta\in\mathfrak{m}_{T}$, and suppose $e^Q<W<\sqrt{T}/2$. Then
\begin{equation}
\label{eq:prime-mult}
\frac{|\mathcal{P}_{\xi,\eta,T}\cap [W]|}{|\cP_W|}
\leq O\left(\frac{\log{\log{Q}}}{Q}\right).
\end{equation}
\end{lemma}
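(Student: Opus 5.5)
The plan is to show that all primes in $\mathcal{P}_{\xi,\eta,T}\cap[W]$ lie in a single residue class modulo some $q>Q$. Then a Brun--Titchmarsh type bound finishes the proof.

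\emph{Step 1 (differences).} If $p,p'\in\mathcal{P}_{\xi,\eta,T}\cap[W]$, subtracting the two conditions removes $\xi$ and gives $\|(p-p')\eta/N\|_{\T}\leq 1/T$. So every difference $m=p-p'$ with $1\leq |m|< W$ lies in
\[
D\coloneqq\{m : 1\leq |m|<W,\ \|m\eta/N\|_{\T}\leq 1/T\}.
\]
If the set has at most one prime, the ratio in \eqref{eq:prime-mult} is at most $1/|\cP_W|=O(\log W/W)$. Since $W>e^Q$, this is far below $\log\log Q/Q$. So assume there are at least two primes, and let $m_0$ be the smallest positive element of $D$.

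\emph{Step 2 (one rational approximation).} For each $m\in D$, write $m\eta/N=b+\theta$ with $b\in\Z$ and $|\theta|\leq 1/T$. Then $|\eta/N-b/m|\leq 1/(mT)$; let $a_m/q_m$ be $b/m$ in lowest terms, so $q_m\mid m$.

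If $q_m\leq Q$ we would have $|\eta/N-a_m/q_m|\leq 1/T$, contradicting $\eta\in\mathfrak{m}_T$. Hence $q_m>Q$.

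Two distinct fractions with denominators below $W$ differ by at least $1/W^2>4/T$, using $W<\sqrt T/2$. Both lie within $1/T$ of $\eta/N$, so they cannot differ. Thus every $m\in D$ yields the same reduced fraction $a/q$, with $q>Q$ and $q\mid m$.

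Consequently all primes in the set are congruent modulo $q$. This is the step I expect to need the most care: I must check that the reduced fraction is really independent of $m$, and that $q$ divides every difference rather than just $m_0$. Both should follow from the separation inequality above, exactly as in the standard uniqueness argument for Dirichlet approximations.

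\emph{Step 3 (counting primes in a progression).} Now count primes $p\leq W$ in one fixed residue class modulo $q$. At most one such prime shares a factor with $q$, since such a prime must divide $q$; we can ignore it.

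If $q\leq\sqrt W$, Brun--Titchmarsh gives at most $2W/(\varphi(q)\log(W/q))=O(W/(\varphi(q)\log W))$ primes. Dividing by $|\cP_W|=\Theta(W/\log W)$ yields a ratio of $O(1/\varphi(q))=O(\log\log q/q)$, using $\varphi(q)\gg q/\log\log q$.

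If $q>\sqrt W$, the progression contains at most $W/q+1\leq 2\sqrt W$ integers up to $W$. This gives a ratio of $O(\log W/\sqrt W)$, which is $O(\log\log Q/Q)$ because $W>e^Q$.

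Finally, $\log\log q/q$ is decreasing for large $q$, and $q>Q$, so both cases give the bound \eqref{eq:prime-mult}.
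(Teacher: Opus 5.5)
Your proof is correct. Its first half is exactly the paper's argument for \cref{clm:ap}: subtracting the two defining conditions eliminates $\xi$; each difference $p-p'$ yields a rational approximation to $\eta/N$ with denominator below $W$; the separation $1/W^2>4/T$ forces all of these to be a single reduced fraction $a/q$ with $q\mid p-p'$; and $\eta\in\mathfrak{m}_T$ forces $q>Q$. The point you flagged as needing care is already settled by your own separation argument.

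Where you diverge is the prime count. The paper invokes Siegel--Walfisz, which only applies to moduli $q\le(\log W)^c$. It therefore splits at $q=(\log W)^2$ and uses the trivial bound $W/q$ above that. You instead use Brun--Titchmarsh, an upper bound valid uniformly for all $q<W$ with effective constants, so you can split at $q=\sqrt W$. Since only an upper bound is needed, Brun--Titchmarsh is the more natural and more elementary input: it is a sieve bound and does not rest on Siegel's ineffective theorem, whereas Siegel--Walfisz supplies an asymptotic the argument never uses.

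You also explicitly dispose of two cases the paper passes over quickly. One is a residue class $r$ with $\gcd(r,q)>1$, which contains at most one prime. The paper's application of Siegel--Walfisz to the class $q\Z+r$ tacitly requires $\gcd(r,q)=1$. The other is the case of at most one prime in the set.
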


\subsection{Proof of \nameref{lem:prime-dense}}

For this proof, we need the Siegel--Walfisz theorem~\cite[Corollary 11.19]{Montgomery_Vaughan_2006} that shows an upper bound on the density of primes in arithmetic progressions.
\begin{theorem}[Siegel--Walfisz]\label{thm:sw}
There exists a constant $C>0$ such that the following holds:  for any $c>0$ and any $a, q, W$ with $\gcd(a, q) = 1$ and $q \leq (\log{W})^c$ it holds that
\[
|\cP_W \cap (q\Z + a)| = \Oh\left(\frac{W}{\phi(q)\log{W}}\right) + \Oh_c\left(W \exp(-C\log{W})\right),
\]
where $\phi(\cdot)$ is Euler's totient function.
\end{theorem}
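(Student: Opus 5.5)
Since only an \emph{upper} bound is claimed, I would not go through the full Siegel--Walfisz asymptotic (with its ineffective Siegel-zero input). Instead I would derive the statement from the Brun--Titchmarsh inequality. In its Montgomery--Vaughan form, this says that for all $1\leq q<W$ and $\gcd(a,q)=1$,
\[
|\cP_W\cap(q\Z+a)| \leq \frac{2W}{\phi(q)\log(W/q)} .
\]
The plan has three steps: establish (or cite) this sieve bound, compare $\log(W/q)$ with $\log W$ in the range $q\leq(\log W)^c$, and absorb the remaining small values of $W$ into the $O_c(\cdot)$ error term. Note that the error term in the statement then plays no real role except for small $W$.

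\textbf{Step 1 (sieve bound).} I would prove a bound of the form $|\cP_W\cap(q\Z+a)|\leq O\big(W/(\phi(q)\log(W/q))\big)$ using the Selberg upper-bound sieve or the large sieve.
\begin{itemize}
\item Write the progression as $\{a+qm : 0\leq m< W/q\}$, a set of about $X\coloneqq W/q$ integers.
\item Sift it by all primes $r\leq z\coloneqq X^{1/2}$ with $r\nmid q$. For each such $r$, exactly one residue class of $m$ modulo $r$ is removed.
\item Every prime $p\leq W$ in the progression with $p>z$ survives the sieve. At most $z$ primes are lost by the cutoff.
\item Selberg's sieve bounds the number of survivors by $X/G(z)+O(z^2)$, where
\[
G(z)=\sum_{k\leq z,\ \gcd(k,q)=1}\frac{\mu^2(k)}{\phi(k)} \geq \frac{\phi(q)}{q}\,\log z .
\]
\item This gives $O\big(\tfrac{W}{q}\cdot\tfrac{q}{\phi(q)\log(W/q)}\big)+O(W/q)^{1/2}$.
\end{itemize}
The constant is irrelevant here because the statement only asks for $O(\cdot)$. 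The real obstacle is the lower bound on $G(z)$: it needs the multiplicativity argument that restricting to $k$ coprime to $q$ costs exactly the factor $\phi(q)/q$. In the write-up I would simply cite this step (e.g.\ Montgomery--Vaughan, Theorem~3.9), in the same spirit as the paper's citation.

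\textbf{Step 2 (removing the dependence on $q$ in the logarithm).} For $q\leq(\log W)^c$ we have
\[
\log(W/q)\geq \log W - c\log\log W \geq \tfrac12\log W
\]
once $W\geq W_0(c)$. The secondary term $(W/q)^{1/2}$ is at most $W^{1/2}\leq W/(\phi(q)\log W)$ for $W\geq W_0'(c)$, because $\phi(q)\log W\leq(\log W)^{c+1}$. Together these give the main term $O\big(W/(\phi(q)\log W)\big)$ with an absolute constant.

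\textbf{Step 3 (small $W$).} For $W<\max(W_0,W_0')$, which is a constant depending only on $c$, the left side is trivially at most $W$. This is $O_c\big(W\exp(-C\log W)\big)$ for any fixed $C>0$, since $\exp(C\log W)=W^{C}$ is bounded by a constant depending only on $c$ in this range. This yields the statement with the stated dependence of constants.
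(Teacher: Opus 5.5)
Your argument is correct, but it is not what the paper does: the paper gives no proof of this statement and simply cites the Siegel--Walfisz theorem (Montgomery--Vaughan, Corollary~11.19). You instead derive the bound from the Brun--Titchmarsh inequality $|\cP_W\cap(q\Z+a)|\leq 2W/(\phi(q)\log(W/q))$. You then use $q\leq(\log W)^c$ to get $\log(W/q)\geq\tfrac12\log W$ once $W\geq W_0(c)$. Finally you absorb the bounded range $W<W_0(c)$ into the $\Oh_c$ term, which works for every $C>0$ because there $W\leq W_0(c)^{C}\cdot W^{1-C}$.

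For how the paper uses the theorem, your route is the more economical and robust one. The paper only applies it as an upper bound, with $c=2$, in the proof of the Prime Bound lemma. Brun--Titchmarsh is an effective sieve bound with no appeal to Siegel's ineffective theorem. It also gives $\Oh_\delta\big(W/(\phi(q)\log W)\big)$ uniformly for $q\leq W^{1-\delta}$, far beyond the range $q\leq(\log W)^2$ the paper needs.

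Your approach also makes clear that the error term in the statement is vacuous except for bounded $W$. This sidesteps the fact that $\exp(-C\log W)=W^{-C}$ is not the usual Siegel--Walfisz error $\exp(-C\sqrt{\log W})$: as a two-sided asymptotic that error would be out of reach, but as a pure upper bound it is harmless. The only thing the citation buys beyond your argument is the matching asymptotic (lower bound), which the paper never uses.

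One small slip in your optional sieve sketch: with $z=X^{1/2}$ the Selberg remainder $\Oh(z^2)$ is $\Oh(X)$, not $\Oh(X^{1/2})$, and that swamps the main term. You would need $z=X^{1/2}/\log X$, or the sharper estimate $\sum_{d\leq z}|\lambda_d|\ll z/\log z$. Since you intend to cite Brun--Titchmarsh anyway, this does not affect the proof.
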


Below, we will establish the next claim:
\begin{claim}
\label{clm:ap}
For every $\xi,\eta\in\ZN$ there exist $q\in\N$, $a,r\in\{0,1,\ldots,q-1\}$ with $\gcd(a,q)=1$ s.t.
\[
\mathcal{P}_{\xi,\eta,T}\cap [W]\subset q\Z+r\qquad\text{and}\qquad\bigg|\frac{\eta}{N}- \frac{a}{q} \bigg|\leq \frac{1}{T}.
\]
\end{claim}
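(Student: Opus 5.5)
The plan is to obtain $(a,q)$ from Dirichlet's approximation theorem and then use a ``gap'' argument to show that all primes in $\mathcal{P}_{\xi,\eta,T}\cap[W]$ are congruent modulo $q$. Throughout I use the standing hypothesis $W<\sqrt{T}/2$ from \cref{lem:prime-dense}. If $\mathcal{P}_{\xi,\eta,T}\cap[W]$ has at most one element, any valid $(a,q)$ together with a suitable $r$ works, so the only content is the case of two or more primes.

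\textbf{Step 1 (rational approximation).} By Dirichlet's theorem with parameter $K\le T$, there are $1\le q\le K$ and $a$ with $\gcd(a,q)=1$ such that
\[
\frac{\eta}{N}=\frac{a}{q}+\theta,\qquad |\theta|\le \frac{1}{qK}.
\]
With $K=T$ this gives $|\theta|\le 1/T$, which is the second requirement of the claim. Reduce $a$ modulo $q$ so that $a\in\{0,\dots,q-1\}$.

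\textbf{Step 2 (differences of primes).} Take $p,p'\in \mathcal{P}_{\xi,\eta,T}\cap[W]$. Subtracting the two defining conditions removes $\xi$:
\[
\Big\|\tfrac{(p-p')\eta}{N}\Big\|_{\T}\le \tfrac1T.
\]
Substituting $\eta/N=a/q+\theta$ and using $|p-p'|<W$ gives
\[
\Big\|\tfrac{(p-p')a}{q}\Big\|_{\T}\le \tfrac1T+\tfrac{W}{qK}.
\]
Suppose $q\nmid (p-p')$. Since $\gcd(a,q)=1$, the left side is then at least $1/q$. This is a contradiction as soon as $1/q>1/T+W/(qK)$, that is, as soon as $q<T(1-W/K)$. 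In that regime all such primes lie in a single residue class $r$ modulo $q$, which is the first requirement of the claim.

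\textbf{Step 3 (boundary regime; main obstacle).} Step 2 leaves a tension. Getting $|\theta|\le 1/T$ wants $K$ as large as $T$, while the gap argument wants $q$ bounded away from $T$ by roughly $W$. I would take $K=T$ and split into two cases.

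\begin{itemize}
\item If $q\le T-2W$, Step 2 applies directly, since then $T(1-W/T)=T-W>q$.
\item If $q>T-2W\ge T/2$ (using $W<\sqrt T/2$), then $|\theta|\le 1/(qT)\le 2/T^2$ is tiny. In this case I would go back to the individual conditions $\|\xi/N+pa/q+p\theta\|_{\T}\le 1/(2T)$ and argue directly about the residues $pa \bmod q$. The perturbation $p\theta$ has size at most $2W/T^2\ll 1/T^2$. Since $\gcd(a,q)=1$, distinct residues of $p$ modulo $q$ give values $pa/q$ whose fractional parts are separated by at least $1/q\ge 1/T$. These must all fit inside an interval of length $1/T+O(W/T^2)$ around $-\xi/N$, which should force at most two residue classes, and only in a degenerate near-tie configuration.
\end{itemize}

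This near-tie case is where I expect the real work. If needed, I would exploit the slack in the constants: pick $K$ slightly below $T$, or observe that the downstream use in \cref{lem:prime-dense} only needs $|\eta/N-a/q|\le O(1/T)$ together with $q$ large for minor arcs. Failing that, I would handle the leftover configuration by replacing $(a,q)$ with a convergent of smaller denominator.

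Once the residue class $r$ and the approximation $(a,q)$ are secured, the claim follows.
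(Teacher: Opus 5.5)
Your Steps 1 and 2 are sound, including the estimate $\|(p-p')a/q\|_{\T}\ge 1/q$ when $q\nmid p-p'$. Step 3, however, is a genuine gap. It cannot be closed as long as $(a,q)$ is an arbitrary output of Dirichlet's theorem with $K=T$, because in the regime $q\in(T-W,T]$ the set $\mathcal{P}_{\xi,\eta,T}\cap[W]$ really can meet two residue classes mod $q$; your own near-tie analysis already allows for this.

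Here is a concrete configuration:
\begin{itemize}
\item Take primes $p_1<p_2\le W$ and put $D=p_2-p_1$. Pick $q\in(T-D,T]$ coprime to $D$, and integers $a,a_0$ with $aD-a_0q=1$.
\item Since $q+D>T$, the interval of radius $1/(TD)$ around $a_0/D$ and the interval of radius $1/(qT)$ around $a/q$ (whose centres are $1/(qD)$ apart) overlap in an interval of positive length.
\item So for large $N$ there is an $\eta$ for which $a/q$ is a legitimate Dirichlet approximation, $|\eta/N-a/q|\le 1/(qT)$, while also $\|D\eta/N\|_{\T}<1/T$.
\item The latter lets you choose $\xi$ with $p_1,p_2\in\mathcal{P}_{\xi,\eta,T}$. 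Yet $q>T-D>W>D$, so $q\nmid p_2-p_1$.
\end{itemize}

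Your fallback, that the downstream argument only needs $|\eta/N-a/q|\le O(1/T)$ with $q$ large, changes the statement rather than proving it. The proof of \cref{lem:prime-dense} uses containment in a single class $q\Z+r$.

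The missing idea is to take the denominator from the primes themselves. Suppose $p_1<p_2$ both lie in $\mathcal{P}_{\xi,\eta,T}\cap[W]$.
\begin{itemize}
\item Then $\|(p_2-p_1)\eta/N\|_{\T}\le 1/T$ gives an integer $b$ with $|\eta/N-b/(p_2-p_1)|\le 1/(T(p_2-p_1))$.
\item Writing $b/(p_2-p_1)=a/q$ in lowest terms gives $q\le p_2-p_1<W$ and $|\eta/N-a/q|\le 1/(qT)$.
\item Your Step 2 now closes, because $q+W<2W<T$.
\end{itemize}
Equivalently, you could choose the Dirichlet approximation of minimal denominator, which is then at most $W$.

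The paper's proof is this idea in a slightly different form. Every pair of primes in the set gives a fraction $a(p_1,p_2)/(p_2-p_1)$ within $1/T$ of $\eta/N$. Two distinct fractions with denominators at most $W$ are at least $1/W^2>2/T$ apart, so all these fractions equal one reduced $a/q$, and $q$ then divides every difference $p_2-p_1$. The paper handles the case of at most one prime by taking $a/q=\eta/N$ in lowest terms; your treatment of that case is also fine.
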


With this claim, we proceed to show the upper bound~\eqref{eq:prime-mult}.
If $\mathcal{P}_{\xi,\eta,T} = \emptyset$, the bound is trivial. Suppose it is non-empty. We use \cref{clm:ap} to find $a, q, r$ associated with $\xi,\eta$. Note that we must have $q \geq Q$, because otherwise $\eta\in\mathfrak{M}_T$, contradicting our choice of $\eta \in \mathfrak{m}_{T}$. Suppose first that~$q \geq (\log{W})^2$. Then $|\mathcal{P}_{\xi,\eta,T}\cap [W]| \leq W/q$ since $\mathcal{P}_{\xi,\eta,T} \subseteq q\Z + r$. The bound~\eqref{eq:prime-mult} follows from
\[
    |\mathcal{P}_{\xi,\eta,T} \cap [W]| \leq \frac{W}{q} \leq \frac{W}{(\log{W})^2} \leq \Oh\left(\frac{|\cP_W|}{\log{W}}\right) \leq \Oh\left(\frac{|\cP_W|}{Q}\right).
\]
Otherwise, suppose then that $Q \leq q \leq (\log{W})^2$. Here we can apply \cref{thm:sw}:
\begin{equation}\label{eq:sw-apply}
    |\mathcal{P}_{\xi,\eta,T}\cap [W]| \leq O\left(\frac{W}{\phi(q) \log{W}}\right) .
\end{equation}
It is known (e.g.~\cite[Thm~2.9]{Montgomery_Vaughan_2006}) that $\phi(q) = \Omega(q/\log\log
q)$. This gives $\phi(q)\geq \Omega(Q / \log\log Q)$ since $q \geq Q$. Plugging this in~\eqref{eq:sw-apply} yields the desired bound~\eqref{eq:prime-mult}, completing the proof.

\begin{proof}[Proof of \cref{clm:ap}]
If $\mathcal{P}_{\xi,\eta,T}\cap [W]$ is the empty set or contains exactly one element then we can simply take $q= \frac{N}{\gcd(N,\eta)}$ and $a= \frac{\eta}{\gcd(N,\eta)}$ and the first part of the claim follows readily. So for the remainder of this proof, let us assume that $\mathcal{P}_{\xi,\eta,T}\cap [W]$ contains at least two elements.

It follows from the definition of $\mathcal{P}_{\xi,\eta,T}$ and the triangle inequality that for all $p_1,p_2\in \mathcal{P}_{\xi,\eta,T}\cap [W]$ with $p_1<p_2$ we have 
\[
\bigg\|\frac{p_2\eta}{N}- \frac{p_1\eta}{N}\bigg\|_{\T}\leq \frac{1}{T}.
\]
This means there exists some integer $a(p_1,p_2)$ such that
\[
\bigg|\frac{(p_2-p_1)\eta}{N} -a(p_1,p_2) \bigg|\leq \frac{1}{T},
\]
which implies
\[
\bigg|\frac{\eta}{N}- \frac{a(p_1,p_2)}{(p_2-p_1)} \bigg|\leq \frac{1}{T}.
\]
Using the triangle inequality again, we conclude that for all $p_1,p_2,p_3,p_4\in \mathcal{P}_{\xi,\eta,T}\cap [W]$ with $p_1<p_2$ and $p_3<p_4$ we have
\[
\bigg|\frac{a(p_1,p_2)}{(p_2-p_1)}- \frac{a(p_3,p_4)}{(p_4-p_3)} \bigg|\leq \frac{2}{T}.
\]
Since $T/2 > W^2$, whereas $(p_2-p_1)$ and $(p_4-p_3)$ are no larger than $W$, we must have
\[
\frac{a(p_1,p_2)}{(p_2-p_1)}=\frac{a(p_3,p_4)}{(p_4-p_3)}.
\]
In other words, there exists a fraction $a/q$ with $\gcd(a,q)=1$ such that for all $p_1,p_2\in \mathcal{P}_{\xi,\eta,T}\cap[W]$ with $p_1<p_2$,
\[
\frac{a(p_1,p_2)}{(p_2-p_1)} = \frac{a}{q}.
\]
This implies that for all $p_1,p_2\in \mathcal{P}_{\xi,\eta,T}\cap [W]$ the difference $p_2-p_1$ is divisible by $q$ and hence for some $r\in \{0,1,\ldots,q-1\}$ we have $\mathcal{P}_{\xi,\eta,T}\cap[W]\subset q\Z+r$, which yields the claim.
\end{proof}

\section{Communication Lower Bounds} \label{sec:cc-lb}

Our communication lower bounds use the discrepancy method. The
\emph{discrepancy} of $f\colon\cX\times\cY\to\{0,1\}$ relative to a distribution
$\cD$ over $\cX\times\cY$ is defined by
\[
    \disc(f,\cD) \coloneqq
    \max_R | \cD_0(R) - \cD_1(R) |
\]
where the maximum is over all rectangles $R=A\times B$ with $A\subseteq\cX$, $B\subseteq\cY$,
and
\[
    \cD_b(R) = \mathbb{P}_{\bm z \sim \cD}[ \bm z \in R \wedge f(\bm z) = b ] .
\]
The \emph{discrepancy bound} for a function $f$ is
\[
\Disc(f) \coloneqq \max_\cD \log(1/\disc(f,\cD)),
\]
where the maximum is over all distributions over $\cX\times\cY$. We have the following basic fact.
\begin{fact}[{\cite[\S6]{Rao2020}}]
\label{fact:disc}
$\Rcc(f)\geq\Omega(\Disc(f))$ for all $f$.
\end{fact}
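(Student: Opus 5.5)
The plan is the standard discrepancy argument: reduce to a deterministic protocol via an averaging (easy direction of Yao's principle) argument, then bound its advantage rectangle by rectangle. Fix $f$ and a distribution $\cD$ on $\cX\times\cY$, and let $\Pi$ be a public-coin protocol computing $f$ with error at most $1/3$ on every input, using $c=\Rcc(f)$ bits. Averaging over the public random string, the expected probability (over $\bm z\sim\cD$) that $\Pi$ errs is at most $1/3$. Hence some fixed setting of the randomness gives a deterministic protocol $\Pi'$ of cost $c$ with $\P_{\bm z\sim\cD}[\Pi'(\bm z)\neq f(\bm z)]\leq 1/3$.

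Next I will use the rectangle structure of $\Pi'$. Its leaves partition $\cX\times\cY$ into at most $2^c$ rectangles $R_1,\dots,R_m$, each carrying an output label $b_j\in\{0,1\}$. The quantity $\P[\Pi'\text{ correct}]-\P[\Pi'\text{ wrong}]$ is at least $2/3-1/3=1/3$. It also decomposes over the leaves as
\[
\sum_{j} \big(\cD_{b_j}(R_j)-\cD_{1-b_j}(R_j)\big)\;\leq\; m\cdot \disc(f,\cD)\;\leq\; 2^c\,\disc(f,\cD).
\]
Therefore $2^c\geq 1/(3\,\disc(f,\cD))$, that is, $c\geq \log(1/\disc(f,\cD))-\log 3$. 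Taking $\cD$ to be the maximising distribution gives $\Rcc(f)\geq \Disc(f)-\log 3$.

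The only remaining point is to turn this additive loss into the multiplicative $\Omega(\cdot)$. I split into two cases.

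\begin{itemize}
\item If $\Disc(f)=0$, the claim is trivial.
\item If $\Disc(f)>0$, then $f$ is not constant. Indeed, for constant $f$ one of $\cD_0,\cD_1$ vanishes, so the full rectangle $\cX\times\cY$ already has discrepancy $1$ for every $\cD$. Since $f$ is not constant, any protocol must communicate, so $\Rcc(f)\geq 1$.
\end{itemize}

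Combining the two bounds in the second case gives
\[
\Rcc(f)\geq \max\big(1,\Disc(f)-\log 3\big)\geq \Disc(f)/(1+\log 3).
\]
I expect no step to be a real obstacle. The only care needed is in this last bookkeeping step, handling small discrepancy values so that the constant in $\Omega$ is absolute.
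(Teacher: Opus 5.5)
Your argument is correct and is the standard proof behind the cited fact. The paper gives no proof of its own and defers to Rao--Yehudayoff, where one likewise fixes the public randomness to get a deterministic protocol with distributional error at most $1/3$ under $\cD$, then bounds the advantage $1/3$ by a sum of at most $2^c$ leaf-rectangle discrepancies. Your extra bookkeeping also holds up: you turn $\Rcc(f)\geq \Disc(f)-\log 3$ into a multiplicative bound using $\Disc(f)=0$ for constant $f$ and $\Rcc(f)\geq 1$ otherwise, which is sound under the usual convention that the output is determined by the leaf of the protocol tree.
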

\subsection{Point--Line Incidence}

We use the \nameref{lem:line} to prove a discrepancy bound.
We restate the lemma here for convenience.

\linelemma*

\cref{thm:cc-main} follows from the next discrepancy bound, combined with
\cref{fact:disc}.
\begin{lemma} \label{lem:disc}
$\Disc(\PL) = \Omega(\log n)$.
\end{lemma}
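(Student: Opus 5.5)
We will exhibit a distribution $\cD$ on inputs with $\disc(\PL,\cD)\leq n^{-\Omega(1)}$. Take $d,W\leq\exp(n'^{0.4})$ from the \nameref{lem:line} applied with a parameter $n'=\Theta(n)$ (say $n'=n/10$), so that $dW\ll 2^{n/4}$. Set $M\coloneqq 2^{n/2}$ (or any $M\ge 2^{n'}$ with $MW+d\le 2^n$), and $N\coloneqq 2M$ so that lines $\ell_{p,M}$ are safe. Alice's points are $(x,y)\in[M]^2\subseteq[2^n]^2$, and Bob's line parameters are $(p,b)$ with $|b|\le pM+M+d$, which are legitimate $n$-bit inputs. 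Let $\cD=\frac12\cD_0+\frac12\cD_1$ as in the overview. In $\cD_1$ we draw $(\bm x,\bm y)\in[M]^2$ uniformly and $\bm p\in\cP_W$ uniformly, and Bob gets the line $y=\bm p x+(\bm y-\bm p\bm x)$. In $\cD_0$, Bob gets the line $y=\bm px+(\bm y-\bm p\bm x-d)$, which misses $(\bm x,\bm y)$ since $d\ge2$. So $\cD_i$ is supported on $\PL^{-1}(i)$.

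Fix a rectangle $A\times B$ with $A\subseteq[M]^2$, and identify $B$ with a set of pairs (slope $p$, offset). We express $\cD_1(A\times B)-\cD_0(A\times B)$ (up to the factor $\frac12$) in terms of incidences. The line with slope $p$ and offset $b$ passes through exactly those $(x,y)\in[M]^2$ on $\ell_{p,M}+z$ for an anchor $z$. Counting pairs, this gives
\[
\cD_1(A\times B)=\E_p\,\frac{1}{M^2}\sum_{b:(p,b)\in B}|A\cap L_{p,b}|,\qquad \cD_0(A\times B)=\E_p\,\frac{1}{M^2}\sum_{b:(p,b)\in B}|A\cap L_{p,b}^{\uparrow d}|,
\]
where $L_{p,b}$ is the set of lattice points of the line restricted to $[M]^2$. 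We then rewrite each sum over $b$ as an average over anchors $z\in\ZN^2$. Each line with $|X_p|\approx 2M/p$ points is counted once per point in the capped segment $\ell_{p,M}+z$, so dividing by $|\ell_{p,M}|$ handles the multiplicity. This yields
\[
|\cD_1(A\times B)-\cD_0(A\times B)|\;\lesssim\; \E_p\,\frac{N^2}{M^2}\,\E_{z}\Big[\One_{B_p}(z)\,\big|(\ell_{p,M}\star\One_A)(z)-(\ell_{p,M}^{\downarrow d}\star\One_A)(z)\big|\Big],
\]
where $B_p$ is the set of anchors corresponding to Bob's lines. The main bookkeeping obstacle is that a capped segment $\ell_{p,M}+z$ anchored at an arbitrary $z$ is not the full line $L_{p,b}$ intersected with $[M]^2$. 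We will handle this by anchoring at points of $A$ only: $\cD_1$ picks $z\in A$ and then the whole line through it, and $\ell_{p,M}+z$ contains all of $L_{p,b}\cap[M]^2$ because the coordinate window $(-M,M)$ covers $[M]$ from any base point. Thus the identity \eqref{eq:densities} holds exactly, with the same normalisation on both sides, and the shifted version is the same computation with $A^{\uparrow d}$ in place of $A$.

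Finally, bound $\One_{B_p}\le1$ and apply Cauchy--Schwarz, $\|\cdot\|_1\le\|\cdot\|_2$, to get
\[
|\cD_1(A\times B)-\cD_0(A\times B)|\le \frac{N^2}{M^2}\,\E_p\big\|\ell_{p,M}\star f-\ell^{\downarrow d}_{p,M}\star f\big\|_2
\]
with $f=\One_A\colon\ZN^2\to[0,1]$. Since $N^2/M^2=4$, the \nameref{lem:line} gives $O(n^{-0.1})$. Therefore $\disc(\PL,\cD)\le O(n^{-0.1})$, so $\Disc(\PL)\ge 0.1\log n-O(1)=\Omega(\log n)$. The step we expect to need the most care is the exact normalisation in the middle paragraph. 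Because $|\ell_{p,M}|$ depends only on $p$ and not on $z$, the two distributions weight lines identically and no density ratio appears. Therefore no lower bound on $|A|$ is needed, in contrast to the overview's TV-distance formulation.
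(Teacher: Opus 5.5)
\textbf{Verdict: gap.} Your route is the paper's: the same pair $\cD_0,\cD_1$, a reduction of the rectangle discrepancy to an $L^1$ distance between $\ell_{p,M}\star\One_A$ and its $d$-shift, then Cauchy--Schwarz and the \nameref{lem:line}. Bounding $\cD_1(A\times B)-\cD_0(A\times B)$ directly rather than via $\TV$ is cosmetic. The paper's proof also needs no lower bound on $|A|$, since $|A|$ cancels there. The choice $M=2^{n/2}$ is fine.

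The problem is the step you flagged: your proposed fix does not work. If anchors are restricted to points of $A$ (or of $[M]^2$), a line $L$ gets $|A\cap L|$ (or $|L\cap[M]^2|$) anchors, not $|\ell_{p,M}|$. That number depends on $L$; lines near the corners of $[M]^2$ have very few points. So $\sum_z\One_{B_p}(z)\,(\ell_{p,M}\star\One_A)(z)$ becomes a line-dependent reweighting of $\sum_{L\in B}|A\cap L|$, not that sum. Moreover, the $\cD_0$ anchors would lie in a shifted copy of $A$, so the two sums do not pair up into $\One_{B_p}(z)\,|(\ell_{p,M}\star\One_A)(z)-(\ell^{\downarrow d}_{p,M}\star\One_A)(z)|$. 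Separately, in your formula for $\cD_0$ the shifted object must be the full line, not the shift of $L_{p,b}\subseteq[M]^2$.

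The missing observation is that no single anchor needs to see the whole line; instead, double count over all anchors. Let $\tilde B_p\subseteq\Z^2$ be the union of Bob's slope-$p$ lines, viewed as full lines in $\Z^2$. Since $\ell_{p,M}=-\ell_{p,M}$, a point $a$ lies in $w+\ell_{p,M}$ for exactly $|\ell_{p,M}|$ anchors $w$, all on the slope-$p$ line through $a$. Hence, exactly,
\begin{gather*}
\sum_{a\in A}\One_{\tilde B_p}(a)=\sum_{w}\One_{\tilde B_p}(w)\,(\ell_{p,M}\star\One_A)(w),\\
\sum_{a\in A}\One_{\tilde B_p}\big(a-(0,d)\big)=\sum_{w}\One_{\tilde B_p}(w)\,(\ell_{p,M}\star\One_A)\big(w+(0,d)\big).
\end{gather*}
Subtracting and normalising gives your display.

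The contributing anchors, however, range over $(-M,2M)\times(-M-d,2M)$, and this window must embed in $\ZN^2$ without wrap-around; $N\geq 3M+d$ suffices. With your $N=2M$ it does not embed. For $p\geq 2$, the lines with intercepts $b$ and $b+2M$ can both meet $[M]^2$, yet they coincide modulo $2M$. Their anchor segments then collide, and $\One_{B_p}$ is not well defined on $\ZN^2$. Taking $N=4M$, as the paper does, fixes this; the \nameref{lem:line} allows any $N\geq 2^{n'}$. It only changes the constant $N^2/M^2$ to $16$, after which your Cauchy--Schwarz step and the conclusion go through.

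The paper's write-up is loose at this same point. It asserts $\ell_{p,M}\star f(x,y)=|A\cap\ell|/|\ell_{p,M}|$ for every $(x,y)\in\ell^*$, which fails in general for points of $\ell^*$ outside $[M]^2$. Summing over all points of the full line inside the window, instead of over $\ell^*$, repairs it by the same double counting.
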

\begin{proof}
Let $M \coloneqq 2^n$ so that Alice's inputs are $(x,y) \in [M]^2$. Let $N =
4M$. Let $d, W \leq \exp(n^{0.4})$ be obtained from the \nameref{lem:line}. We
define distributions $\cD_0$ and $\cD_1$ following the instructions of
\cref{sec:overview}, and define $\cD \coloneqq (\cD_0 + \cD_1)/2$.

Let $A \times B$ be any rectangle (so $A \subseteq [M]^2$ is a set of points,
and $B$ a set of lines).
Let $\bm \ell_1$ be a random line obtained by choosing $(\bm x, \bm y) \sim A$
uniformly at random, choosing $\bm p \sim \cP_W$ uniformly at random, and taking
the line with slope $\bm p$ through $(\bm x, \bm y)$. Let
$\bm \ell_0$ be independently distributed in a similar way, but taking
the line through $(\bm x, \bm y - d)$ instead. We may upper bound the discrepancy
on $A \times B$ using the TV distance:
\begin{equation}
\label{eq:disc-to-tv}
\begin{aligned}
    |\cD_0(A \times B) - \cD_1(A \times B)|
    &= \frac{|A|}{M^2} \left| \Pr{ \bm \ell_1 \in B }
        - \Pr{ \bm \ell_0 \in B } \right| \\
    &\leq \frac{|A|}{M^2} \max_C \left| \Pr{ \bm \ell_1 \in C }
        - \Pr{ \bm \ell_0 \in C } \right| \\
    &= \frac{|A|}{M^2} \cdot \TV(\bm \ell_1, \bm \ell_0) ,
\end{aligned}
\end{equation}
where the maximum is over all sets $C$ of lines. To establish the TV distance
bound, define the function $f \colon \ZN^2 \to [0,1]$ as $f(x,y) =
\One_{A}[x,y]$. Recall from \cref{eq:def-segment} that
\[
    \ell_{p,M} \coloneqq \{ (h, ph) \in [-M+1,M-1]^2 \colon h \in \Z \} .
\]
For a fixed line $\ell \subset [M]^2$ with slope $p \leq W$, we extend it into
$\ell^* \subset \ZN^2$ as
\[
    \ell^* \coloneqq (u,v) + \ell_{p,M} ,
\]
where $(u,v) \in \ell$ is chosen arbitrarily.
Observe that
$\ell = \ell^* \cap [M]^2$ since $\ell \subseteq (u,v) + \ell_{p,M}$ for all $(u,v) \in \ell$,
and $\ell^* \subseteq [-M, 2M]^2$, so there are no
``wraparounds'' in $\ZN^2$. For every $(x,y) \in \ell^*$,
\begin{align*}
    \ell_{p,M} \star f(x,y)
    = \E_{(h,ph) \in \ell_{p,M}} [ f(x+h, y+ph) ]
    = \frac{1}{|\ell_{p,M}|} |A \cap \ell | ,
\end{align*}
since $\ell \subseteq (x,y) + \ell_{p,M}$ by definition. Similarly,
\[
    \ell_{p,M}^\downarrow \star f(x,y)
    = \ell_{p,M} \star f(x,y + d)
    = \frac{1}{|\ell_{p,M}|} |A \cap \ell^\downarrow| .
\]
Observe that $|\ell^*| = |\ell_{p,M}|$. 
Therefore, we may write
\begin{align*}
    \Pr{ \bm \ell_1 = \ell}
        &= \Pr{ \bm p = p } \cdot \frac{|A \cap \ell|}{|A|} 
        = \Pr{ \bm p = p } \frac{1}{|A|}
            \cdot \sum_{(x,y) \in \ell^*} \ell_{p,M} \star f(x,y) ,\\ 
            \intertext{and, similarly,}
    \Pr{ \bm \ell_0 = \ell} &= \Pr{ \bm p = p } \cdot \frac{|A \cap \ell^\downarrow|}{|A|}
        = \Pr{ \bm p = p } \frac{1}{|A|}
            \cdot \sum_{(x,y) \in \ell^*} \ell_{p,M}^\downarrow \star f(x,y) .
\end{align*}
Then we can use the \nameref{lem:line} to bound the TV distance by
\begin{align*}
    2 \cdot \TV(\bm \ell_1, \bm \ell_0)
    &= \sum_{p \in \cP_W} \sum_{\text{lines } \ell \text{ of slope } p}
        \left| \Pr{ \bm \ell_1 = \ell } - \Pr{ \bm \ell_0 = \ell } \right| \\
    &= \frac{1}{|A|} \cdot \E_{\bm p}\left[ \sum_{\ell \text{ of slope } \bm p }
        \left|\sum_{(x,y) \in \ell^*} \ell_{\bm p, M} \star f(x,y)
            - \ell_{\bm p, M}^\downarrow \star f(x,y) \right| \right] \\
\intertext{For each $p$, the line segments $\ell^*$ are disjoint, so:}
    &\leq \frac{1}{|A|} \cdot \E_{\bm p}\left[
        \sum_{(x,y) \in [-M,2M]^2} \left| \ell_{\bm p, M} \star f(x,y)
            - \ell_{\bm p, M}^\downarrow \star f(x,y) \right| \right] \\
    &\leq \frac{3M}{|A|} \cdot \E_{\bm p}\left[
        \left(\sum_{(x,y) \in [-M,2M]^2} \left( \ell_{\bm p, M} \star f(x,y)
            - \ell_{\bm p, M}^\downarrow \star f(x,y) \right)^2 \right)^{1/2} \right]
            \tag{Cauchy--Schwarz}\\
    &\leq \frac{3M}{|A|} \cdot N \cdot \E_{\bm p}\left[
        \| \ell_{\bm p, M} \star f - \ell_{\bm p, M}^\downarrow \star f \|_2  \right] \\
    &\leq O\left(\frac{M N}{|A| n^{0.1}}\right) 
        \tag{\nameref{lem:line}} .
\end{align*}
Using \cref{eq:disc-to-tv}, this gives a bound of
\[
    |\cD_0(A \times B) - \cD_1(A \times B)|
    \leq O\left( \frac{N}{M n^{0.1}} \right)
    = O(1/n^{0.1}) . \qedhere
\]
\end{proof}

\subsection{Integer Inner Product} \label{sec:iip}

We now prove~\cref{cor:iip}. Recall that $\PL$ is a special case of $\IIP_n^3\colon\cZ^3\times\cZ^3\to\{0,1\}$, where $\cZ$ is the set of $n$-bit integers, in the sense that $\PL$ is a submatrix of $\IIP_n^3$. Consider the function~$\And_k \circ \IIP_n^3$ that first evaluates $k$ copies of $\IIP_n^3$ and then outputs their logical-AND, that is,
\[
(\And_k \circ \IIP_n^3)(x,y) \coloneqq
\And_k(\IIP_n^3(x^1,y^1),\ldots,\IIP_n^3(x^k,y^k)),
\]
where $x\coloneqq(x^1,\ldots,x^k)$ and $x^i\coloneqq (x^i_1,x^i_2,x^i_3)\in\cZ^3$ and similarly for $y$. We claim that $\And_k \circ \IIP_n^3$ reduces to~$\IIP_n^{3k}$ via a randomised reduction, which will show that
\begin{equation}
\label{eq:and}    
\Rcc(\And_k \circ \IIP_n^3) \leq O(\Rcc(\IIP_n^{3k})).
\end{equation}
Indeed, suppose $(x,y)$ are the inputs to $\And_k \circ \IIP_n^3$. We let Alice replace her input $x$ by
\[
\bm z\odot x\coloneqq (\bm z_1 x^1,\ldots ,\bm z_k x^k),
\]
where Alice chooses $\bm z \in \{-1,1\}^k$ uniformly at random. Then:
\begin{itemize}[label=$-$,noitemsep]
\item If $\langle x^i,y^i\rangle =0$ for all $i\in[k]$, then $\langle \bm z \odot x,y\rangle =0$ with probability 1.
\item If $\langle x^i,y^i\rangle \neq 0$ for some $i\in[k]$, then $\langle \bm z \odot x,y\rangle  \neq 0$ with probability $\geq 1/2$.
\end{itemize}
These two properties show that any randomised protocol for $\IIP_n^{3k}$ can be used to derive a randomised protocol for $\And_k\circ \IIP_n^3$, proving~\cref{eq:and}.

It remains to show that for every $k\leq n^\epsilon$ where $\epsilon>0$ is a sufficiently small constant,
\begin{equation} \label{eq:comp}
\Rcc(\And_k\circ\PL)\geq \Omega(k\log n).
\end{equation}
To this end, we employ the following $\And$-composition lemma from~\cite[Lemma~10]{GJPW2018}. The lemma there is originally stated with a measure \textsc{2WAPP} (aka ``smooth rectangle bound'') in place of $\Disc$, but the latter is a lower bound on the former~\cite{Jain2010}.
\begin{lemma}[{\cite{GJPW2018}}] \label{lem:and}
$\Disc(f)\leq O(\Rcc(\And_k\circ f)/k + \log\Rcc(\And_k\circ f))$ for all $f$.
\end{lemma}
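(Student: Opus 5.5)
This lemma is imported from~\cite{GJPW2018}. I would reprove it by going through the one-sided measure $\textsc{2WAPP}$ and then using $\Disc(f)\leq O(\textsc{2WAPP}(f))$~\cite{Jain2010}. So it suffices to show $\textsc{2WAPP}(f)\leq O(c/k+\log c)$, where $c\coloneqq\Rcc(\And_k\circ f)$.

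Recall that a $\textsc{2WAPP}$ protocol of cost $w$ is a distribution over (rectangle, label) pairs, with $2^{w}$ weighted rectangles, accepting every $1$-input with probability in $[(1-\epsilon)\alpha,\alpha]$ and every $0$-input with probability in $[0,\epsilon\alpha]$, for some possibly tiny $\alpha>0$. The freedom to make $\alpha$ exponentially small, of order $2^{-O(c/k)}$, is what lets us beat the naive simulation.

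\textbf{Step 1 (amplify, then view the protocol as rectangles).} First amplify the protocol for $\And_k\circ f$ to error $1/\mathrm{poly}(c)$; this costs $O(c\log c)$ only in pathological cases. Better, keep cost $O(c)$ and error $1/3$, and absorb the loss into the $\log c$ term later. Each accepting leaf of each deterministic protocol in the support is a rectangle $R=R_A\times R_B$ with $R_A\subseteq\cX^k$ and $R_B\subseteq\cY^k$. There are at most $2^{O(c)}$ of them per random string.

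\textbf{Step 2 (restrict a rectangle to one coordinate).} Given an input $(x,y)$ of $f$, the players use public randomness to pick a coordinate $\bm i\in[k]$ and plant $(x,y)$ there. For the remaining coordinates, each player privately fixes the string $x_{-\bm i}$ or $y_{-\bm i}$ as a $1$-input of $f$ in each slot. They do this by independently sampling from fixed product-like witnesses, namely pairs in $f^{-1}(1)$ chosen via public randomness together with "nondeterministic guessing". Guessing is free in $\textsc{2WAPP}$: it only rescales $\alpha$. An accepting rectangle $R$ of the big protocol then induces, for each fixing $(x_{-i},y_{-i})$, a rectangle on coordinate $i$: the fibre $\{(x,y): (x_{-i}\!\oplus_i x,\ y_{-i}\!\oplus_i y)\in R\}$. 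This gives a $\textsc{2WAPP}$-type protocol for $f$ built from those fibres. Its $0$-inputs are accepted with small probability because then $\And_k\circ f=0$. Its $1$-inputs are accepted with probability proportional to the big protocol's acceptance.

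\textbf{Step 3 (cost accounting via a direct-sum split of acceptance probability).} The crucial point is that the induced protocol's acceptance weight $\alpha$ may be as small as $2^{-O(c)}$. What must be controlled is the \emph{number} of rectangles, that is, the cost, and this should be $O(c/k)$. The plan is a chain-rule/averaging argument. Write the acceptance probability of a rectangle on the all-$1$ product input as a telescoping product over coordinates $i=1,\dots,k$ of conditional acceptance ratios. Since the total is at least $2^{-O(c)}$, some typical coordinate $\bm i$ loses only a factor $2^{-O(c/k)}$. Guessing which coordinate and which transcript prefix is "cheap" then costs $\log k+\log c$ bits, which is where the additive $\log c$ comes from. (We may assume $k\leq 2^{c}$, since otherwise the statement is trivial.)

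I expect Step 3 to be the main obstacle: turning the per-coordinate average loss into an actual $\textsc{2WAPP}$ protocol for a \emph{single} coordinate with $2^{O(c/k)}$ rectangles while keeping the $0$-input soundness ratio $\epsilon$ under control. I would first try the corruption/smooth-rectangle direct-product formulation, as in Klauck's and Razborov's treatments of $\And$/\textsc{Disjointness}. There one shows dually: if every rectangle $R$ for $f$ satisfies $\mu_0(R)\geq \epsilon\,\mu_1(R)-2^{-w}$ under a hard pair $(\mu_0,\mu_1)$, then under $\mu_1^{\otimes k}$ versus "$\mu_1^{\otimes k}$ with one random slot replaced by $\mu_0$", every rectangle has $1$-mass at most $2^{-\Omega(wk)}$ unless it is corrupted. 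I would then combine this with LP duality for $\textsc{2WAPP}$ to get the stated bound.
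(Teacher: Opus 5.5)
Your opening reduction is exactly what the paper does. The paper does not prove this lemma: it imports the $\textsc{2WAPP}$ version from \cite{GJPW2018} (their Lemma~10) and combines it with $\Disc(f)\leq O(\textsc{2WAPP}(f))$ from \cite{Jain2010}. Had you stopped at the citation there would be nothing to add. Your attempted reproof of the $\textsc{2WAPP}$ bound, however, has a genuine gap.

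\textbf{The cost model.} In $\textsc{2WAPP}$ the term $\log(1/\alpha)$ is charged; in your weighted-rectangle formulation with $2^w$ rectangles one automatically has $\alpha\geq 2^{-w}$. It must be charged, since otherwise every $f$ would have cost $O(1)$. Publicly pick a uniformly random $(x',y')\in\cX\times\cY$, let Alice and Bob check $x=x'$ and $y=y'$ with one bit each, and output $f(x',y')$ if both checks pass. Every input then receives its correct label with probability exactly $1/|\cX\times\cY|$ and the wrong label never. So ``guessing is free'' and ``$\alpha$ may be as small as $2^{-O(c)}$'' are both false. You need $\alpha\geq 2^{-O(c/k+\log c)}$ as well as few rectangles.

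\textbf{Steps 2 and 3.} Step~2 only reproduces the trivial embedding $\Rcc(f)\leq\Rcc(\And_k\circ f)$: the fibres of the $2^{O(c)}$ accepting leaves give a protocol of cost $O(c)$, and nothing in Step~3 reduces this. Telescoping $\mu_1^{\otimes k}(R)$ into per-coordinate ratios can at best say that acceptance drops by only $2^{-O(c/k)}$ at a typical coordinate. It says nothing about how many rectangles are needed to simulate the protocol there.

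The telescoping also fails as stated. Conditioning on $R$ in the other coordinates destroys the product structure between Alice's and Bob's sides when $\mu_1$ is non-product. Every useful $\mu_1$ on $f^{-1}(1)$ is non-product, because a product distribution supported on $f^{-1}(1)$ lives on a $1$-monochromatic rectangle. Your fallback, a direct-product theorem for corruption under $\And$-composition with an arbitrary hard pair followed by LP duality, is not something you carry out, and I know of no general result of that form for arbitrary $f$.

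\textbf{The missing idea: information complexity.} This is the approach of \cite{GJPW2018}; its abstract describes the needed one-sided information/smooth-rectangle equivalence.

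\emph{Direct sum.} Fix a distribution $\mu$ on $f^{-1}(1)$; the $\And$ embedding forces $1$-inputs in the other slots. Plant the input of $f$ in coordinate $i$ and sample the other coordinates from $\mu$ with the usual mixture of coins: public coins for $X_{<i}$ and $Y_{>i}$, while Bob privately samples $Y_{<i}$ and Alice privately samples $X_{>i}$. Each embedded protocol computes $f$ correctly on \emph{all} inputs. By the chain rule, their information costs under $\mu$ sum to at most that of the big protocol, hence to at most $c$. So some coordinate, fixed by averaging rather than guessed, has information cost at most $c/k$, measured on $1$-inputs only. A $\log k$ guessing cost would in fact be unaffordable: for $f=\textsc{Equality}$, $\And_k\circ f$ is again \textsc{Equality} and has $\Rcc=O(1)$ for every $k$.

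\emph{Information to rectangles.} A main technical result of \cite{GJPW2018} is that information cost measured only on $1$-inputs essentially upper-bounds the smooth rectangle measure. A low-information protocol is compressed, in the spirit of \cite{Braverman2015} and of zero-communication protocols with abort, into a nonnegative combination of rectangles. Its acceptance probability is about $2^{-O(I)}$ and its cost is $O(I)$, up to the additive logarithmic loss in the statement. It is in this compression step, not in a rectangle-by-rectangle telescoping, that an acceptance probability of order $2^{-O(c/k)}$ legitimately arises.
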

Instantiating this with $f\coloneqq \PL$ gets us
\begin{align*}
\log n
&\leq O(\Disc(\PL)) \tag{\cref{lem:disc}} \\
&\leq O(\Rcc(\And_k\circ \PL)/k + \log\Rcc(\And_k\circ \PL)) \tag{\cref{lem:and}}\\
&\leq O(\Rcc(\And_k\circ \PL)/k + \log (k\log n)) \\
&\leq O(\Rcc(\And_k\circ \PL)/k + \epsilon\log n). \tag{$k\leq n^\epsilon$}
\end{align*}
Choosing $\epsilon>0$ small enough and rearranging gives~\cref{eq:comp}, as desired.

\bigskip\medskip
\subsubsection*{Acknowledgements}
We thank Hamed Hatami, Kaave Hosseini, Shachar Lovett, and Raghu Meka for discussions.
Special thanks to Oliver G\"o\"os for serving as an uncritical sounding board during the writing of the paper.
M.G.~and A.S.\ are supported by the Swiss State Secretariat for Education, Research, and Innovation (SERI) under contract number
MB22.00026. F.K.R.~was supported by the Swiss National Science Foundation grant TMSGI2-211214.

\pagebreak

\DeclareUrlCommand{\Doi}{\urlstyle{sf}}
\renewcommand{\path}[1]{\small\Doi{#1}}
\renewcommand{\url}[1]{\href{#1}{\small\Doi{#1}}}
\bibliographystyle{alphaurl}
\bibliography{references}

\newcommand{\etalchar}[1]{$^{#1}$}
 \providecommand{\FOCS}{Proceedings of the Symposium on Foundations of Computer Science (FOCS)} \providecommand{\SODA}{Proceedings of the Symposium on Discrete Algorithms (SODA)} \providecommand{\STOC}{Proceedings of the Symposium on Theory of Computing (STOC)} \providecommand{\CCC}{Proceedings of the Conference on Computational Complexity (CCC)} \providecommand{\ITCS}{Proceedings of the Innovations in Theoretical Computer Science Conference (ITCS)} \providecommand{\ICALP}{Proceedings of the International Colloquium on Automata, Languages, and Programming (ICALP)} \providecommand{\ICML}{Proceedings of the International Conference on Machine Learning (ICML)} \providecommand{\COLT}{Proceedings of the Conference on Learning Theory (COLT)} \providecommand{\AISTATS}{Proceedings of the International Conference on Artificial Intelligence and Statistics (AISTATS)} \providecommand{\TOCT}{ACM Transactions on Computation Theory (TOCT)} \providecommand{\RANDOM}{International Conference on Randomization and Computation
  (RANDOM)} \providecommand{\SOCG}{Proceedings of the Symposium on Computational Geometry (SoCG)} \providecommand{\FSTTCS}{Proceedings of the Conference on Foundations of Software Technology and Theoretical Computer Science (FSTTCS)} \providecommand{\JACM}{Journal of the ACM} \providecommand{\SIAMJOC}{SIAM Journal on Computing} \providecommand{\TOC}{Theory of Computing} \providecommand{\TOIT}{IEEE Transactions on Information Theory} \providecommand{\NEURIPS}{Advances in Neural Information Processing Systems (NeurIPS)} \providecommand{\JMLR}{Journal of Machine Learning Research} \providecommand{\SOSA}{Symposium on Simplicity in Algorithms (SOSA)}
\begin{thebibliography}{GJPW18}

\bibitem[ACHS24]{ACHS24}
Manasseh Ahmed, Tsun-Ming Cheung, Hamed Hatami, and Kusha Sareen.
\newblock Communication complexity and discrepancy of halfplanes.
\newblock In {\em \SOCG}, pages 5:1--5:17. Schloss Dagstuhl, 2024.
\newblock \href {https://doi.org/10.4230/LIPIcs.SoCG.2024.5} {\path{doi:10.4230/LIPIcs.SoCG.2024.5}}.

\bibitem[BHT25]{Balla2025}
Igor Balla, Lianna Hambardzumyan, and István Tomon.
\newblock Factorization norms and an inverse theorem for {M}ax{C}ut.
\newblock In {\em \FOCS}, pages 947--963. IEEE, 2025.
\newblock \href {https://doi.org/10.1109/focs63196.2025.00049} {\path{doi:10.1109/focs63196.2025.00049}}.

\bibitem[BW15]{Braverman2015}
Mark Braverman and Omri Weinstein.
\newblock A discrepancy lower bound for information complexity.
\newblock {\em Algorithmica}, 76(3):846--864, 2015.
\newblock \href {https://doi.org/10.1007/s00453-015-0093-8} {\path{doi:10.1007/s00453-015-0093-8}}.

\bibitem[CHH{\etalchar{+}}25]{Cheung25ITCS}
Tsun-Ming Cheung, Hamed Hatami, Kaave Hosseini, Aleksandar Nikolov, Toniann Pitassi, and Morgan Shirley.
\newblock A lower bound on the trace norm of boolean matrices and its applications.
\newblock In {\em \ITCS}, volume 325 of {\em LIPIcs}, pages 37:1--37:15. Schloss Dagstuhl, 2025.
\newblock \href {https://doi.org/10.4230/LIPIcs.ITCS.2025.37} {\path{doi:10.4230/LIPIcs.ITCS.2025.37}}.

\bibitem[CHHS23]{Cheung23}
Tsun-Ming Cheung, Hamed Hatami, Kaave Hosseini, and Morgan Shirley.
\newblock Separation of the factorization norm and randomized communication complexity.
\newblock In {\em \CCC}, volume 264 of {\em LIPIcs}, pages 1:1--1:16. Schloss Dagstuhl, 2023.
\newblock \href {https://doi.org/10.4230/LIPIcs.CCC.2023.1} {\path{doi:10.4230/LIPIcs.CCC.2023.1}}.

\bibitem[CLV19]{CLV19}
Arkadev Chattopadhyay, Shachar Lovett, and Marc Vinyals.
\newblock Equality alone does not simulate randomness.
\newblock In {\em \CCC}, pages 14:1--14:11. Schloss Dagstuhl, 2019.
\newblock \href {https://doi.org/10.4230/LIPIcs.CCC.2019.14} {\path{doi:10.4230/LIPIcs.CCC.2019.14}}.

\bibitem[dW03]{dWol03}
Ronald de~Wolf.
\newblock Nondeterministic quantum query and communication complexities.
\newblock {\em \SIAMJOC}, 32(3):681--699, 2003.
\newblock \href {https://doi.org/10.1137/s0097539702407345} {\path{doi:10.1137/s0097539702407345}}.

\bibitem[EHK22]{EHK22}
Louis Esperet, Nathaniel Harms, and Andrey Kupavskii.
\newblock Sketching distances in monotone graph classes.
\newblock In {\em \RANDOM}, pages 18--1. Schloss Dagstuhl, 2022.
\newblock \href {https://doi.org/10.4230/LIPIcs.APPROX/RANDOM.2022.18} {\path{doi:10.4230/LIPIcs.APPROX/RANDOM.2022.18}}.

\bibitem[FGHH25]{FGHH25}
Yuting Fang, Mika G\"o\"os, Nathaniel Harms, and Pooya Hatami.
\newblock Constant-cost communication does not reduce to $k$-{H}amming distance.
\newblock In {\em \STOC}, 2025.
\newblock \href {https://doi.org/10.48550/arXiv.2407.20204} {\path{doi:10.48550/arXiv.2407.20204}}.

\bibitem[FH07]{FH07}
Shaun Fallat and Leslie Hogben.
\newblock The minimum rank of symmetric matrices described by a graph: A survey.
\newblock {\em Linear Algebra and its Applications}, 426(2–3):558--582, 2007.
\newblock \href {https://doi.org/10.1016/j.laa.2007.05.036} {\path{doi:10.1016/j.laa.2007.05.036}}.

\bibitem[FHHH24]{FHHH24}
Yuting Fang, Lianna Hambardzumyan, Nathaniel Harms, and Pooya Hatami.
\newblock No complete problem for constant-cost randomized communication.
\newblock In {\em \STOC}, 2024.
\newblock \href {https://doi.org/10.48550/arXiv.2404.00812} {\path{doi:10.48550/arXiv.2404.00812}}.

\bibitem[FX14]{FX14}
Vitaly Feldman and David Xiao.
\newblock Sample complexity bounds on differentially private learning via communication complexity.
\newblock In {\em \COLT}, pages 1000--1019. PMLR, 2014.
\newblock \href {https://doi.org/10.48550/arXiv.1402.6278} {\path{doi:10.48550/arXiv.1402.6278}}.

\bibitem[GHIS25]{GHIS25}
Mika G{\"o}{\"o}s, Nathaniel Harms, Valentin Imbach, and Dmitry Sokolov.
\newblock Sign-rank of $k$-{H}amming distance is constant.
\newblock In {\em \FOCS}, pages 2353--2368. IEEE, December 2025.
\newblock \href {https://doi.org/10.1109/focs63196.2025.00123} {\path{doi:10.1109/focs63196.2025.00123}}.

\bibitem[GHR25]{GHR25}
Mika G\"{o}\"{o}s, Nathaniel Harms, and Artur Riazanov.
\newblock Equality is far weaker than constant-cost communication.
\newblock In {\em \RANDOM}, volume 353 of {\em LIPIcs}, pages 58:1--58:14. Schloss Dagstuhl, 2025.
\newblock \href {https://doi.org/10.4230/LIPIcs.APPROX/RANDOM.2025.58} {\path{doi:10.4230/LIPIcs.APPROX/RANDOM.2025.58}}.

\bibitem[GJPW18]{GJPW2018}
Mika G{\"o}{\"o}s, T.S. Jayram, Toniann Pitassi, and Thomas Watson.
\newblock Randomized communication vs. partition number.
\newblock {\em ACM Transactions on Computation Theory}, 10(1):4:1--4:20, 2018.
\newblock \href {https://doi.org/10.1145/3170711} {\path{doi:10.1145/3170711}}.

\bibitem[HH24]{HH24}
Hamed Hatami and Pooya Hatami.
\newblock Guest column: Structure in communication complexity and constant-cost complexity classes.
\newblock {\em ACM SIGACT News}, 55(1):67--93, 2024.
\newblock \href {https://doi.org/10.1145/3654780.3654788} {\path{doi:10.1145/3654780.3654788}}.

\bibitem[HHH22]{HHH22}
Lianna Hambardzumyan, Hamed Hatami, and Pooya Hatami.
\newblock A counter-example to the probabilistic universal graph conjecture via randomized communication complexity.
\newblock {\em Discrete Applied Mathematics}, 322:117--122, 2022.
\newblock \href {https://doi.org/10.1016/j.dam.2022.07.023} {\path{doi:10.1016/j.dam.2022.07.023}}.

\bibitem[HHH23]{HHH23}
Lianna Hambardzumyan, Hamed Hatami, and Pooya Hatami.
\newblock Dimension-free bounds and structural results in communication complexity.
\newblock {\em Israel Journal of Mathematics}, 253(2):555--616, 2023.
\newblock \href {https://doi.org/10.1007/s11856-022-2365-8} {\path{doi:10.1007/s11856-022-2365-8}}.

\bibitem[HHL20]{HHL20}
Hamed Hatami, Kaave Hosseini, and Shachar Lovett.
\newblock Sign rank vs discrepancy.
\newblock In {\em \CCC}, pages 18--1. Schloss Dagstuhl, 2020.
\newblock \href {https://doi.org/10.4230/LIPIcs.CCC.2020.18} {\path{doi:10.4230/LIPIcs.CCC.2020.18}}.

\bibitem[HHM23]{HHM23}
Hamed Hatami, Kaave Hosseini, and Xiang Meng.
\newblock A {Borsuk-Ulam} lower bound for sign-rank and its applications.
\newblock In {\em \STOC}, pages 463--471, 2023.
\newblock \href {https://doi.org/10.1145/3564246.3585210} {\path{doi:10.1145/3564246.3585210}}.

\bibitem[HHP{\etalchar{+}}22]{HHPTZ22}
Hamed Hatami, Pooya Hatami, William Pires, Ran Tao, and Rosie Zhao.
\newblock Lower bound methods for sign-rank and their limitations.
\newblock In {\em \RANDOM}, pages 22--1. Schloss Dagstuhl, 2022.
\newblock \href {https://doi.org/10.4230/LIPIcs.APPROX/RANDOM.2022.22} {\path{doi:10.4230/LIPIcs.APPROX/RANDOM.2022.22}}.

\bibitem[HP10]{HP2010}
Kristoffer~Arnsfelt Hansen and Vladimir Podolskii.
\newblock Exact threshold circuits.
\newblock In {\em \CCC}, pages 270--279. IEEE, 2010.
\newblock \href {https://doi.org/10.1109/ccc.2010.33} {\path{doi:10.1109/ccc.2010.33}}.

\bibitem[HWZ22]{HWZ22}
Nathaniel Harms, Sebastian Wild, and Viktor Zamaraev.
\newblock Randomized communication and implicit graph representations.
\newblock In {\em \STOC}, 2022.
\newblock \href {https://doi.org/10.1145/3519935.3519978} {\path{doi:10.1145/3519935.3519978}}.

\bibitem[HZ24]{HZ24}
Nathaniel Harms and Viktor Zamaraev.
\newblock Randomized communication and implicit representations for matrices and graphs of small sign-rank.
\newblock In {\em \SODA}, pages 1810--1833. SIAM, 2024.
\newblock \href {https://doi.org/10.1137/1.9781611977912.72} {\path{doi:10.1137/1.9781611977912.72}}.

\bibitem[JK10]{Jain2010}
Rahul Jain and Hartmut Klauck.
\newblock The partition bound for classical communication complexity and query complexity.
\newblock In {\em \CCC}, pages 247--258. IEEE, 2010.
\newblock \href {https://doi.org/10.1109/CCC.2010.31} {\path{doi:10.1109/CCC.2010.31}}.

\bibitem[MV06]{Montgomery_Vaughan_2006}
Hugh Montgomery and Robert Vaughan.
\newblock {\em Multiplicative Number Theory I: Classical Theory}.
\newblock Cambridge Studies in Advanced Mathematics. Cambridge University Press, 2006.
\newblock \href {https://doi.org/10.1017/CBO9780511618314} {\path{doi:10.1017/CBO9780511618314}}.

\bibitem[PS86]{PS86}
Ramamohan Paturi and Janos Simon.
\newblock Probabilistic communication complexity.
\newblock {\em Journal of Computer and System Sciences}, 33(1):106--123, 1986.
\newblock \href {https://doi.org/10.1016/0022-0000(86)90046-2} {\path{doi:10.1016/0022-0000(86)90046-2}}.

\bibitem[RY20]{Rao2020}
Anup Rao and Amir Yehudayoff.
\newblock {\em Communication Complexity: and Applications}.
\newblock Cambridge University Press, January 2020.
\newblock \href {https://doi.org/10.1017/9781108671644} {\path{doi:10.1017/9781108671644}}.

\bibitem[SY23]{Srinivasan2023}
Srikanth Srinivasan and Amir Yehudayoff.
\newblock The discrepancy of greater-than.
\newblock Technical report, arXiv, 2023.
\newblock \href {https://doi.org/10.48550/ARXIV.2309.08703} {\path{doi:10.48550/ARXIV.2309.08703}}.

\bibitem[Tao07]{Tao2007}
Terence Tao.
\newblock Structure and randomness in combinatorics.
\newblock In {\em \FOCS}, pages 3--15. IEEE, 2007.
\newblock \href {https://doi.org/10.1109/focs.2007.17} {\path{doi:10.1109/focs.2007.17}}.

\bibitem[Vio15]{Viola2015}
Emanuele Viola.
\newblock The communication complexity of addition.
\newblock {\em Combinatorica}, 35(6):703--747, 2015.
\newblock \href {https://doi.org/10.1007/s00493-014-3078-3} {\path{doi:10.1007/s00493-014-3078-3}}.

\end{thebibliography}

\end{document}